\documentclass{article}
\usepackage{arxiv}



\usepackage{hyperref}
\usepackage{microtype}

\usepackage{amssymb}
\usepackage{amsmath}
\usepackage{amsthm}
\usepackage{mathdots}
\usepackage{mathtools}

\usepackage{tensor}

\usepackage{urwchancal}
\DeclareFontFamily{OT1}{pzc}{}
\DeclareFontShape{OT1}{pzc}{m}{it}{<-> s * [1.10] pzcmi7t}{}
\DeclareMathAlphabet{\mathpzc}{OT1}{pzc}{m}{it}

\usepackage{graphicx}
\usepackage{ifpdf}
\ifpdf
\usepackage{epstopdf}
\epstopdfsetup{update,prepend}
\PrependGraphicsExtensions{.svg}
\DeclareGraphicsRule{.svg}{pdf}{.pdf}{
  `inkscape -z -D #1 --export-pdf=\noexpand\OutputFile
}
\fi

\usepackage{xypic}

\newtheorem{theorem}{Theorem}[section]
\newtheorem{lemma}[theorem]{Lemma}
\newtheorem{corollary}[theorem]{Corollary}

\newtheorem{definition}[theorem]{Definition}

\newtheorem{remark}[theorem]{Remark}

\newenvironment{List}
{\begin{list}
 {[\arabic{enumi}]}
 {\usecounter{enumi}
  \settowidth{\labelwidth}{[99]}}
  \setlength{\labelsep}{1em}}
{\end{list}}




\providecommand{\N}{\mathbb{N}}
\providecommand{\R}{\mathbb{R}}



\providecommand{\grpG}{\mathbf{G}}


\providecommand{\gothg}{\mathfrak{g}}

\providecommand{\gothz}{\mathfrak{z}}
\providecommand{\gothB}{\mathfrak{B}}

\providecommand{\gothL}{\mathfrak{L}}
\providecommand{\gothR}{\mathfrak{R}}

\providecommand{\gothX}{\mathfrak{X}} 




\providecommand{\calF}{\mathcal{F}}

\providecommand{\calM}{\mathcal{M}}
\providecommand{\calN}{\mathcal{N}}






\providecommand{\vecV}{\mathbb{V}}


\providecommand{\PD}{\mathbb{S}_+} 





\providecommand{\Id}{I} 





%



\DeclareMathOperator{\Ad}{Ad}

\DeclareMathOperator{\kernel}{ker}
\DeclareMathOperator{\image}{im}






\providecommand{\td}{\mathrm{d}}
\providecommand{\tD}{\mathrm{D}}

\providecommand{\ddt}{\frac{\td}{\td t}}






\usepackage{accents}
\makeatletter
\providecommand{\scirc}{%
    \hbox{\fontfamily{\rmdefault}\fontsize{0.4\dimexpr(\f@size pt)}{0}\selectfont{\raisebox{-0.52ex}[0ex][-0.52ex]{$\circ$}}}}

\makeatother

\mathchardef\mhyphen="2D
\begin{document}

\title{Equivariant Filter Design for Kinematic Systems on Lie Groups}
\headertitle{Equivariant Filter Design for Kinematic Systems on Lie Groups}

\author{
\href{https://orcid.org/0000-0002-7803-2868}{\includegraphics[scale=0.06]{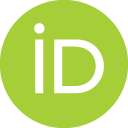}\hspace{1mm}
Robert Mahony}
\\
	Systems Theory and Robotics Group \\
	Australian National University \\
    ACT, 2601, Australia \\
	\texttt{Robert.Mahony@anu.edu.au} \\
	\And	\href{https://orcid.org/0000-0002-5881-1063}{\includegraphics[scale=0.06]{orcid.png}\hspace{1mm}
Jochen Trumpf} \\
	Software Innovation Institute \\
	Australian National University\\
    ACT, 2601, Australia \\
	\texttt{Jochen.Trumpf@anu.edu.au} \\
}
\maketitle

\begin{abstract}
It is known that invariance and equivariance properties for systems on Lie groups can be exploited in the design of high performance and robust observers and filters for real-world robotic systems.
This paper proposes an analysis framework that allows any kinematic system on a Lie group to be embedded in a natural manner into an equivariant kinematic system.
This framework allows us to characterise the properties of, and relationships between, invariant systems, group affine systems, and equivariant systems.
We propose a new filter design, the \emph{Equivariant Filter} (EqF), that exploits the equivariance properties of the system embedding and can be applied to any kinematic system on a Lie group.
\end{abstract}

\keywords{
Equivariant Systems Theory, Nonlinear Observers, Equivariant Filter (EqF)
}

\section{Introduction}

Systems on Lie groups have been studied since the early 1970s with early work mostly due to \cite{brockett1972,brockett1973} and \cite{jurdjevic1972}.
The initial motivation was the dynamic modelling of mechanical systems for control purposes, see \cite{brockett1976}.
More recently \cite{aghannan2003} considered observer design for mechanical systems by exploiting symmetry properties.

The control of aerial robotic systems, and particularly of quadrotor vehicles, requires good estimates of the vehicle's attitude and led to a focus on observer design for kinematic systems on Lie groups where the velocity is measured, cf. \cite{2008_Bonnabel_TAC,RM_2008_MahHamPfl.TAC,lageman2009}.
The success of these algorithms in providing high performance observers for real world systems has led to more general studies of invariance principles and error dynamics, cf.  \cite{bonnabel2009,lageman2010,trumpf2012,RM_2013_Mahony_nolcos}.
More recent work on the structure of kinematic systems on Lie groups was published by \cite{2018_Trumpf_CDC}.

The nature of the possible observer designs is closely linked to the invariance properties of the system considered.
Gradient and explicit designs, cf. \cite{2008_Bonnabel_TAC,RM_2008_MahHamPfl.TAC,lageman2009}, tend to depend on what is termed Type I invariance, cf. \cite{RM_2013_Mahony_nolcos}, for which the error dynamics of the observer are autonomous, cf. \cite{lageman2010}.
A more sophisticated filter design, the Invariant Extended Kalman Filter (IEKF) was introduced by \cite{2009_Bonnabel_cdc} for systems with both Type I and Type II invariant velocities.
In recent work \cite{2017_Barrau_tac} provided an explicit algebraic invariance condition termed \emph{group affine} that characterises the systems to which IEKF design can be applied.

To the best of the authors' knowledge, the precise relationship between group affine systems and (classically) invariant systems has not previously been explored in the literature.

In this paper we study invariance and equivariance properties for systems on Lie groups in the context of observer design.

\begin{figure}[h]
\begin{center}
\includegraphics[scale=0.4]{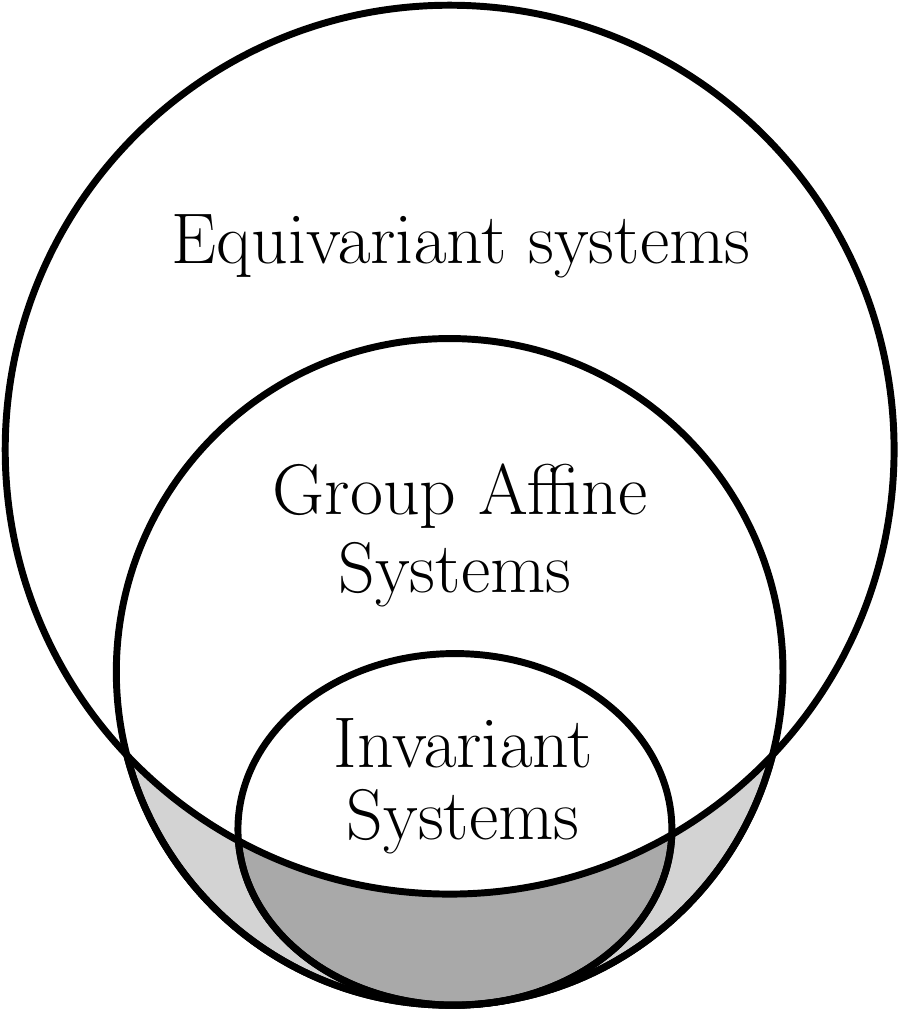}
\caption{Hierarchy of kinematic systems on Lie groups.}
\label{fig:bifurcation}
\end{center}
\end{figure}

There are three key results in this paper:
\begin{itemize}
\item We show that any kinematic system on a Lie group can be embedded into an equivariant kinematic system by a suitable extension of the input vector space.

\item We characterise invariant systems on Lie groups using a direct sum decomposition of the input vector space. Invariant systems are shown to be a subset of group affine systems. The equivariant embedding of group affine systems is shown to only require a finite dimensional input vector space extension.

\item We propose the \emph{Equivariant Filter} (EqF), a new observer design methodology that can be applied to any equivariant system which, considering the first contribution above, means it can be applied to any kinematic system on a Lie group.
\end{itemize}

The approach taken is to model a kinematic system as a linear subspace of the vector space of smooth vector fields on the Lie group parameterized by the input vector space, see Section~\ref{sec:kinematics}.
We show that there is a natural group action on vector fields induced by right translation (Lemma~\ref{lem:d_star_R} in Section~\ref{sec:prelim}) and then go on to define an extended input space as the smallest vector subspace of the space of smooth vector fields that is closed under this action and contains the original system, see Section~\ref{sec:equiv_systems}.
It turns out that this construction is sufficient to ensure equivariance of the resulting extended system (Theorem~\ref{th:equiv_ext}).
We exploit the equivariance property to study first invariant systems, cf. \cite{RM_2013_Mahony_nolcos}, in Section~\ref{sec:invariant_systems} and then group affine systems, cf. \cite{2017_Barrau_tac}, in Section~\ref{sec:GA_systems}.
We show that the equivariant input extension of an invariant system is invariant (Corollary~\ref{cor:right_invariant} and Lemma~\ref{lem:left_invariant_input_ext}) and that the input vector space of an invariant system can always be decomposed into a direct sum of three components, a Type I, a Type II, and a Type 0 component, where the Type 0 component has both Type I and Type II invariance properties (Theorem~\ref{th:inv_sys_decomp}).
We furthermore show that every invariant system is group affine (Lemma~\ref{lem:inv_implies_GA}) and that the equivariant input extension of a group affine system is group affine and only differs from the original system by a finite dimensional Type I invariant velocity subspace (Theorem~\ref{th:GA_ext}).

Finally, we consider the observer design problem in Section~\ref{sec:EqF} and show that the equivariant structure can be exploited to derive a Kalman-Bucy like filter for any equivariant system on a Lie group.
The proposed Equivariant Filter (EqF), see Equations~\eqref{eq:EqF} and \eqref{eq:EqF_Sigma}, specialises to the Invariant Extended Kalman Filter (IEKF) in the case that the system is group affine or invariant.

\section{Preliminaries}\label{sec:prelim}

The explicit derivative of a function $f : \calM \to \calN$ evaluated at $x_0$ is written
\[
\left. \tD_x \right|_{x_0} f(x) : T_{x_0} \calM \to T_{f(x_0)} \calN.
\]
The same notation is used for the partial derivative $\left. \tD_x \right|_{x_0} f(x,y)$ where the $y$ variable is held constant in the differentiation.
The differential of a function is denoted $\td f : T \calM \to T \calN$ with $\td f \cdot \eta_{x_0} = \left. \tD_x \right|_{x_0} f(x) [\eta_{x_0}]$ where $\eta_{x_0} \in T_{x_0} \calM$.

A Lie group $\grpG$ is both a group and a smooth manifold for which the group multiplication and inverse are smooth in the differential structure on the manifold.
In this paper we will restrict our attention to matrix Lie groups, although the results will naturally generalise to any finite dimensional semi-algebraic Lie group and many of the results will be more general again.
We write group multiplication as $AB$ and group inverse as $A^{-1}$ for $A, B \in \grpG$ and think of these as matrix multiplication and inverse.
For an element $A \in \grpG$ define right (resp.~left) multiplication maps $R_A: \grpG \to \grpG$, $R_A(X) := XA$ (resp.~$L_A(X) := AX$).

Let $\gothX(\grpG)$ denote the set of all smooth vector fields on $\grpG$ and note that $\gothX(\grpG)$ is an infinite dimensional vector space.
We will use $\gothg$ to denote the tangent space\footnote{The Lie-algebraic structure of $\gothg$ is not exploited in the present paper.}
 $T_{\Id} \grpG$ of $G$ at the identity.
In a matrix Lie group then $\td R_X \Lambda=\Lambda X$ (resp.~$\td L_X \Lambda=X\Lambda$) for all $X\in\grpG$ and $\Lambda\in\gothg$.
Define
\[
\gothL(\grpG) = \{ \td L_\grpG \Lambda\colon X\mapsto X \Lambda \;|\; \Lambda \in \gothg\} \subset \gothX(\grpG)
\]
to be the set of \emph{left invariant} vector fields.
The mapping $\Lambda \mapsto \td L_\grpG \Lambda$ establishes an isomorphism from $\gothg$ to $\gothL(\grpG)$ and the two spaces are often identified in the literature.
In an analogous fashion, the set of right invariant vector fields $\{ \td R_\grpG \Lambda \;|\; \Lambda \in \gothg \}$ also forms a vector subspace $\gothR(\grpG) \subset \gothX(\grpG)$ isomorphic to $\gothg$.
In this paper, we will distinguish strongly between $\gothg = T_\Id \grpG$ and the two vector spaces $\gothL(\grpG) \subset \gothX(\grpG)$ and $\gothR(\grpG) \subset \gothX(\grpG)$.
The intersection $\gothB(\grpG) := \gothL(\grpG) \cap \gothR(\grpG) \subset \gothX(\grpG)$ is the finite dimensional vector subspace of bi-invariant vector fields.
For a connected Lie group $\grpG$ the bi-invariant vector fields are parametrized by the center $\gothz \lhd \gothg$ via
\[
\gothB(\grpG) = \{ \td L_\grpG \Lambda = \td R_\grpG \Lambda \;|\; \Lambda \in \gothz \}.
\]

A right group action $\psi$ of $\grpG$ on a vector space $V$ is a mapping
\begin{align*}
\psi & \colon \grpG \times V \rightarrow V,
\end{align*}
with $\psi(A,\phi(B,v)) = \psi(BA,v)$ and $\psi(I,v) = v$
for all $A,B\in\grpG$ and $v\in V$.
The action $\psi$ is called \emph{linear} if it induces linear mappings $\psi_A \colon V \rightarrow V$ for $A \in \grpG$ by $\psi_A(v) := \psi(A,v)$.

\begin{lemma}\label{lem:d_star_R}
Define a smooth map $\td_\star R : \grpG \times \gothX(\grpG) \rightarrow \gothX(\grpG)$, by
\begin{align}
{\td_\star R} (Z, F) := \td R_Z \cdot F \circ R_{Z^{-1}} \in \gothX(\grpG).
\label{eq:overline_dphi}
\end{align}
Then ${\td_\star R}$ is a linear group action on the vector space $\gothX(\grpG)$.
\hfill$\square$
\end{lemma}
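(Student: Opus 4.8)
The plan is to verify in turn that $\td_\star R(Z,F)$ is a well-defined smooth vector field, that the two right-action axioms hold, and that $F \mapsto \td_\star R(Z,F)$ is linear; the only genuine care needed is in tracking the base points of tangent vectors. First I would confirm membership in $\gothX(\grpG)$: evaluating at $X \in \grpG$ gives $(\td_\star R(Z,F))(X) = \td R_Z \cdot F(XZ^{-1})$, and since $F(XZ^{-1}) \in T_{XZ^{-1}}\grpG$ while $R_Z(XZ^{-1}) = X$ so that $\td R_Z$ carries $T_{XZ^{-1}}\grpG$ into $T_X\grpG$, the value is a tangent vector at $X$. Smoothness of $X \mapsto (\td_\star R(Z,F))(X)$, and joint smoothness in $(Z,F)$, follows from smoothness of $F$, of right multiplication, and of its differential.

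The identity axiom is immediate, since $R_\Id = \id$ gives $\td R_\Id = \id$ and $R_{\Id^{-1}} = \id$, whence $\td_\star R(\Id, F) = F$. The heart of the argument is the composition axiom. Fixing $A, B \in \grpG$ and unfolding both layers of $\td_\star R(A, \td_\star R(B, F))$ at an arbitrary $X$ yields $\td R_A \cdot \td R_B \cdot F(XA^{-1}B^{-1})$. The key identity is that right translations compose in reverse order, $R_A \circ R_B = R_{BA}$, so by the chain rule $\td R_A \cdot \td R_B = \td R_{BA}$; combined with $(BA)^{-1} = A^{-1}B^{-1}$ this collapses to $\td R_{BA} \cdot F(X(BA)^{-1}) = (\td_\star R(BA, F))(X)$, which is exactly the right-action property $\td_\star R(A, \td_\star R(B, F)) = \td_\star R(BA, F)$.

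Linearity of each $\td_\star R(Z, \cdot)$ then follows because vector fields are added and scaled pointwise and $\td R_Z$ restricts to a linear map on each tangent space, giving $\td_\star R(Z, \alpha F_1 + \beta F_2) = \alpha\, \td_\star R(Z, F_1) + \beta\, \td_\star R(Z, F_2)$. I expect the main obstacle to be purely bookkeeping rather than conceptual: one must keep track of which tangent space each differential acts on and, crucially, respect the order reversal $R_A \circ R_B = R_{BA}$, since reversing it would produce a left action in place of the claimed right action.
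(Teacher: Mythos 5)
Your proof is correct and follows essentially the same route as the paper's: both verify the composition axiom via $R_A \circ R_B = R_{BA}$ together with the chain rule, treat the identity axiom as immediate, and obtain linearity pointwise from linearity of $\td R_Z$ on each tangent space. The only difference is cosmetic — you unwind everything at a point $X$ with explicit base-point bookkeeping and a well-definedness check, while the paper performs the same computation at the level of composed maps.
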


\begin{proof}
Note that for $A\in\grpG$, $R_A$ is a diffeomorphism with smooth inverse $R_{A^{-1}}$.
We have
\begin{align*}
{\td_\star R} (A, {\td_\star R} (B, F)) &=
\td R_A \cdot (\td R_B \cdot F \circ R_{B^{-1}})\circ R_{A^{-1}} \\
& = \td R_{BA} \cdot F \circ R_{{BA}^{-1}} \\
& = {\td_\star R} (BA, F)
\end{align*}
for all $A,B\in\grpG$ and $F\in\gothX(\grpG)$ since $R$ is a right action on $\grpG$.
The identity property of the group action is straightforward and this demonstrates $\td_\star R$ is a group action.
Linearity follows since
\begin{align*}
{\td_\star R} (A, \alpha_1 F_1 & + \alpha_2 F_2) =
\td R_A \cdot (\alpha_1 F_1 + \alpha_2 F_2 ) \circ R_{A^{-1}} \\
& = \alpha_1 \td R_A \cdot  F_1\circ R_{A^{-1}} +
\alpha_2\td R_A \cdot F_2 \circ R_{A^{-1}} \\
& = \alpha_1 {\td_\star R} (A, F_1) + \alpha_2 {\td_\star R} (A, F_2)
\end{align*}
for all $A\in\grpG$, $F_1,F_2\in\gothX(\grpG)$ and $\alpha_1,\alpha_2\in\R$.
\end{proof}

\begin{remark}
The linear group action ${\td_\star R}$ defines a representation of the Lie group $\grpG$ on the infinite dimensional vector space $\gothX(\grpG)$. We will see that equivariant kinematic systems correspond precisely to subrepresentations of this representation.
\hfill$\square$
\end{remark}

\section{Problem formulation}\label{sec:kinematics}

Let $\grpG$ be a Lie group and $V$ be a vector space.
The system function for a kinematic system on $\grpG$ is a \emph{linear} map
\begin{align}
\calF \colon & V \rightarrow \gothX(\grpG)  \\
& v \mapsto \calF_v \notag
\end{align}
where $\gothX(\grpG)$ is the vector space of smooth vector fields on $\grpG$.
Trajectories of the system are given by the ordinary differential equation
\begin{align}\label{eq:system}
\dot{X} = \calF_{v(t)}(X), \quad\quad X(0) \in \grpG
\end{align}
for initial conditions $X(0) \in \grpG$ and input signals $v(t) \in V$.

Without loss of generality we will assume that $\kernel \calF = \{0\} $.
That is, that there is no $v \in V$, $v\ne 0$ such that $\calF_v \equiv 0$ on $\grpG$.
Since $\calF$ is linear with trivial kernel, then $\calF$ is an injection of $V \hookrightarrow \gothX (\grpG)$ onto its image $\image \calF \subset \gothX(\grpG)$.
That is, we can think of $\image \calF$ as a vector subspace of $\gothX(\grpG)$ parametrized by the inputs $v \in V$.

This paper is concerned with understanding various relationships between different properties of kinematic systems on $\grpG$.
A key observation is that some of these system properties are captured in the parametrization of the subspace $\image \calF$ and how it embeds into $\gothX(\grpG)$.

If $V^j \subset V$ is a vector subspace of $V$ then $\calF^j \colon V^j \to \gothX(\grpG)$ is a well defined system function and $\image \calF^j \subset \image \calF \subset \gothX(\grpG)$.
The trajectories of the kinematic system defined by $\calF^j$ form a sub-behaviour of those of the kinematic system defined by $\calF$ and we will use this approach to decompose more complex kinematics of $\calF$ into sub-behaviours with simple kinematics.

\begin{remark}
If the subspace $V^j$ has a complement in $V$, for example in the case of finite dimensional $V$, we can intuitively think of the sub-behaviour given by $V^j$ as consisting of those trajectories of the system where all inputs other than those in $V^j$ are set to zero.
\hfill$\square$\end{remark}

We introduce an algebraic object that is closely related to the system function $\calF$ and is important in understanding invariant system structures.

\begin{definition}\label{def:vel_lift}
Let $\calF \colon V \to \gothX(\grpG)$ be a kinematic system on a Lie group $\grpG$ over a vector space $V$.
The \emph{lift} is the function $\Lambda : \grpG \times V \to \gothg$ defined by
\begin{align}
\Lambda(X,v) := X^{-1} \calF_v(X) \label{eq:lift_function}
\end{align}
for all $X\in\grpG$ and $v\in V$.
\hfill$\square$
\end{definition}

The lift provides an algebraic structure that connects the input vector space $V$ to the Lie algebra $\gothg$.
Properties of equivariance, invariance and being group affine can be expressed as algebraic properties of the map $\Lambda$ that hold on vector subspaces of $V$.

We will also look at embedding the trajectories of a given kinematic system as a sub-behaviour of a higher dimensional kinematic system with desirable properties.

\section{Equivariant Systems}\label{sec:equiv_systems}

A (right) equivariant system is one in which right translation of the system function is the same as evaluating the function at the translated base point along with a possible group action transformation of the input space.
In this section we show that any kinematic system on a Lie group can be embedded in an equivariant system by extending the input space.

\begin{definition}\label{def:equi_sys}\textbf{(Equivariant System)}
A kinematic system $\calF \colon V \rightarrow \gothX(\grpG)$ is \emph{(right) equivariant} if there exists a right group action $\psi : \grpG \times V \rightarrow V$ such that
\begin{align}\label{eq:equi_F}
\td R_A \calF_v (X) = \calF_{\psi_A(v)}(R_A(X))
\end{align}
for all $A,X\in\grpG$ and $v\in V$.
\hfill$\square$
\end{definition}

Assume that a system $\calF\colon V \to \gothX(\grpG)$ is equivariant according to Definition \ref{def:equi_sys}.  Then
\begin{align}
\calF_{\psi_A(v)} (X)
& =  \calF_{\psi_A(v)} (R_A (X A^{-1})) \notag \\
& = \td R_A \calF_v (X {A^{-1}} ) \label{eq:Fpsi:2} \\
& = \td R_A\cdot \calF_v \circ R_{A^{-1}}(X) \notag \\
& = \td_\star R_A \calF_v (X) \label{eq:Fpsi:3}
\end{align}
where \eqref{eq:Fpsi:2} follows from \eqref{eq:equi_F} and \eqref{eq:Fpsi:3} follows from \eqref{eq:overline_dphi}.
In particular, the input group action $\psi$ is uniquely determined by the group action $\td_\star R$ on vector fields (Lemma~\ref{lem:d_star_R}).

Another way of reading \eqref{eq:Fpsi:3} is that for an equivariant kinematic system, the vector subspace $\image \calF\subset\gothX(\grpG)$ is invariant under the group action $\td_\star R$.
Conversely, given such an invariant subspace $V\subset\gothX(\grpG)$, define a kinematic system by the natural injection $\calF\colon V\hookrightarrow\gothX(\grpG)$. The invariance of $V$ then implies that for all $A\in\grpG$ and $v\in V$ we have $\td_\star R_A \calF_v=\calF_w$ for some $w\in V$. Define $\psi_A(v):=w$ and observe that this defines a right group action of $\grpG$ on $V=\image\calF\subset\gothX(\grpG)$ since $\td_\star R$ is a right group action on $\gothX(\grpG)$. The following lemma sums up this observation.

\begin{lemma}\label{lem:equivariance}
A kinematic system $\calF \colon V \rightarrow \gothX(\grpG)$ is equivariant if and only if the vector subspace $\image \calF\subset\gothX(\grpG)$ is invariant under the group action $\td_\star R$.
\hfill$\square$
\end{lemma}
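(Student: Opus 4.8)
The plan is to prove the two implications separately, reusing the computation \eqref{eq:Fpsi:2}--\eqref{eq:Fpsi:3} already carried out above and the fact, from Lemma~\ref{lem:d_star_R}, that $\td_\star R$ is a linear right group action on $\gothX(\grpG)$.

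For the forward implication I would assume $\calF$ is equivariant with witnessing action $\psi$ and simply read off the identity $\calF_{\psi_A(v)} = \td_\star R_A \calF_v$ established in \eqref{eq:Fpsi:3}. Since $\psi_A(v) \in V$, the right-hand side lies in $\image\calF$, whence $\td_\star R_A(\image\calF) \subseteq \image\calF$ for every $A \in \grpG$; this is exactly closure of $\image\calF$ under the action. To see that the inclusion is in fact an equality (so that $\image\calF$ is a genuine invariant subspace) I would apply the same inclusion at $A^{-1}$ and use the action axioms of Lemma~\ref{lem:d_star_R}: $\image\calF = \td_\star R_A\bigl(\td_\star R_{A^{-1}}(\image\calF)\bigr) \subseteq \td_\star R_A(\image\calF)$.

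For the converse I would assume $\image\calF$ is invariant under $\td_\star R$ and construct the witnessing action $\psi$ explicitly. Because $\kernel\calF = \{0\}$, the linear map $\calF$ restricts to a bijection $V \to \image\calF$, so for each $A$ and $v$ the vector $\td_\star R_A\calF_v \in \image\calF$ has a unique $\calF$-preimage; I define $\psi_A(v)$ to be that preimage, i.e. $\psi_A := \calF^{-1}\circ\td_\star R_A\circ\calF$. The right-action axioms for $\psi$ then transport through the bijection $\calF$ from those of $\td_\star R$: the identity axiom is immediate from $\td_\star R_I = \id$, and the composition axiom holds because $\psi_{BA} = \calF^{-1}\circ\td_\star R_{BA}\circ\calF = \calF^{-1}\circ\td_\star R_A\circ\td_\star R_B\circ\calF = \psi_A\circ\psi_B$, matching the convention $\psi(A,\psi(B,v)) = \psi(BA,v)$. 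Finally I would recover the defining equivariance equation~\eqref{eq:equi_F}: by construction $\calF_{\psi_A(v)} = \td_\star R_A\calF_v = \td R_A\cdot\calF_v\circ R_{A^{-1}}$ using \eqref{eq:overline_dphi}, and evaluating both sides at $R_A(X) = XA$ cancels the $R_{A^{-1}}$ and yields $\calF_{\psi_A(v)}(R_A(X)) = \td R_A\calF_v(X)$.

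The only step requiring genuine care is the well-definedness of $\psi$ in the converse direction: it is precisely the standing assumption $\kernel\calF = \{0\}$ that makes the $\calF$-preimage unique, so that $\psi_A(v)$ is unambiguous and $\psi$ is a bona fide map $\grpG\times V \to V$. Everything else is routine transport of the group-action structure through the linear isomorphism $\calF$ together with the right-multiplication bookkeeping, both already foreshadowed in the paragraph preceding the statement; one should, however, keep the order of composition straight throughout to ensure $\psi$ comes out as a right (rather than left) action.
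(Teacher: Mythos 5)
Your proof is correct and takes essentially the same route as the paper: the forward implication reads invariance of $\image\calF$ off the identity $\calF_{\psi_A(v)} = \td_\star R_A \calF_v$ from \eqref{eq:Fpsi:3}, and the converse defines $\psi_A(v)$ as the unique $\calF$-preimage of $\td_\star R_A\calF_v$ and transports the right-action axioms from $\td_\star R$, exactly as in the paragraph preceding the lemma. Your explicit appeal to $\kernel\calF = \{0\}$ for well-definedness and your upgrade of the inclusion $\td_\star R_A(\image\calF)\subseteq\image\calF$ to an equality are simply careful spellings-out of points the paper leaves implicit.
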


\begin{remark}
Since subrepresentations of the representation $\td_\star R$ of $\grpG$ on $\gothX(\grpG)$ are by definition the restrictions of $\td_\star R$ to its invariant subspaces, the above analysis shows that equivariant kinematic systems are in direct correspondence with subrepresentations of the representation $\td_\star R$ of $\grpG$ on $\gothX(\grpG)$.
\hfill$\square$
\end{remark}

A key contribution of this paper is to show that it is possible to embed any kinematic system $F \colon \vecV \to \gothX(\grpG)$ in an equivariant system structure by extending the input space to include the image of $\td_\star R$ acting on $\image F$.

\begin{definition}\label{def:input_ext}\textbf{(Equivariant Input Extension)}
Let $F \colon \vecV \to \gothX(\grpG)$ be a kinematic system on a Lie group over a vector space $\vecV$.
Define
\begin{align}
V = \text{span} \{ \td_\star R_A F_v \;|\; v \in \vecV, A \in \grpG \} \subset \gothX(\grpG) \label{eq:input_extension}
\end{align}
to be the smallest vector subspace of $\gothX(\grpG)$ generated by the image of $\td_\star R$ applied to $\image F$ and define the associated extended kinematic system by the natural injection $\calF\colon V\hookrightarrow\gothX(\grpG)$.
\hfill$\square$
\end{definition}

The vector space $V \subset \gothX(\grpG)$ may be infinite dimensional, even if $\vecV$ is finite dimensional, depending on the nature of the kinematics function $F$ and how it interacts with the group action $\td_\star R$. Since we always assume that the system function $F$ is injective, we can think of $\vecV\equiv\image F\subset V$ as a vector subspace of $V$.

We will continue to use lower case letters $u, v, w \in V$ to refer to elements in $V$ to emphasise its input vector space structure and role.
Since every element of $V$ lies in the span of the vector fields $\td_\star R_A F_v\in\gothX(\grpG)$ then for every $w \in V$
\begin{align}
\calF_w = \sum_{i = 1}^{n_w} \td_\star R_{A_i} F_{v_i}
\label{eq:calF}
\end{align}
for some $n_w \in \N$, $A_i \in \grpG$ and $v_i \in \vecV$.
When the particular element considered $v \in \vecV \subset V$ then $\calF_v = F_v$ is given by the original definition of $F$.

The following theorem justifies the name \emph{equivariant} input extension.

\begin{theorem}\label{th:equiv_ext}
Let $F : \vecV \to \gothX(\grpG)$ be a kinematic system on a Lie group $\grpG$ over a vector space $\vecV$ and let $\calF\colon V\hookrightarrow\gothX(\grpG)$ be its equivariant input extension (Def.~\ref{def:input_ext}).
The system defined by $\calF\colon V\hookrightarrow\gothX(\grpG)$ is an equivariant kinematic system and every $w\in V$ can be written as
\begin{align}\label{eq:input_ext_psi}
  w = \sum_{i = 1}^{n_w} \psi_{A_i}(v_i)
\end{align}
for some $n_w \in \N$, $A_i \in \grpG$ and $v_i \in \vecV$.
\hfill$\square$
\end{theorem}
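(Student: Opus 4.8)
The plan is to prove the two assertions in sequence, using the structural results already in hand. For equivariance I will appeal directly to Lemma~\ref{lem:equivariance}: since $\calF$ is the natural injection, $\image\calF = V$, so it suffices to show that $V$ is invariant under the action $\td_\star R$. The representation formula~\eqref{eq:input_ext_psi} will then drop out of the decomposition~\eqref{eq:calF} once the input action $\psi$ has been identified.

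For the invariance of $V$, I would take an arbitrary $w \in V$ and first reduce it to the standard form~\eqref{eq:calF}. A general element of $V$ is a finite linear combination $\sum_i \alpha_i\,\td_\star R_{A_i} F_{v_i}$, but the scalars can be absorbed because both $\td_\star R_{A_i}$ (Lemma~\ref{lem:d_star_R}) and $F$ are linear, giving $\alpha_i\,\td_\star R_{A_i} F_{v_i} = \td_\star R_{A_i} F_{\alpha_i v_i}$; this is exactly~\eqref{eq:calF}. I would then apply $\td_\star R_B$ for an arbitrary $B \in \grpG$, push it through the sum by linearity, and collapse the composition using the group-action law $\td_\star R_B \circ \td_\star R_{A_i} = \td_\star R_{A_i B}$ from Lemma~\ref{lem:d_star_R}, obtaining
\[
\td_\star R_B w = \sum_{i=1}^{n_w} \td_\star R_{A_i B} F_{v_i}.
\]
Since each $A_i B \in \grpG$, the right-hand side is again of the form~\eqref{eq:calF} and hence lies in $V$. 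Thus $V$ is $\td_\star R$-invariant, and Lemma~\ref{lem:equivariance} yields equivariance together with an input action $\psi$ satisfying $\calF_{\psi_A(v)} = \td_\star R_A \calF_v$ for all $A \in \grpG$ and $v \in V$.

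For the second claim I would write $w \in V$ in the form~\eqref{eq:calF} and rewrite each summand using $\psi$. For $v_i \in \vecV \subset V$ we have $\calF_{v_i} = F_{v_i}$, so the equivariance relation gives $\td_\star R_{A_i} F_{v_i} = \td_\star R_{A_i}\calF_{v_i} = \calF_{\psi_{A_i}(v_i)}$. Because $\calF$ is the natural injection we may read $\calF_{\psi_{A_i}(v_i)}$ as the element $\psi_{A_i}(v_i) \in V$ itself, and summing over $i$ produces~\eqref{eq:input_ext_psi}.

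I do not expect a genuine obstacle here; the content is essentially bookkeeping on top of Lemmas~\ref{lem:d_star_R} and~\ref{lem:equivariance}. The two points that require care are getting the composition order in the action law right, so that $\td_\star R_B$ sends the generating family into itself rather than out of it, and keeping track of the identification $w = \calF_w$ coming from the natural injection, which is precisely what lets the abstract input action $\psi$ be read back inside $\gothX(\grpG)$.
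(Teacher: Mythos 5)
Your proposal is correct and follows essentially the same route as the paper: establish $\td_\star R$-invariance of $V$ by applying $\td_\star R_B$ to the representation~\eqref{eq:calF} and collapsing via the right-action law, invoke Lemma~\ref{lem:equivariance}, and then read~\eqref{eq:input_ext_psi} off from~\eqref{eq:calF} together with the identity $\calF_{\psi_A(v)} = \td_\star R_A \calF_v$ (i.e.~\eqref{eq:Fpsi:3}). The only difference is that you spell out the absorption of scalars into $F$ by linearity and the identification of $\calF_{\psi_{A_i}(v_i)}$ with $\psi_{A_i}(v_i)$ under the natural injection, both of which the paper leaves implicit.
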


\begin{proof}
Since $\td_\star R$ is a right group action then $\td_\star R_B \td_\star R_A = \td_\star R_{AB}$ and using \eqref{eq:calF},
\begin{align*}
\td_\star R_B \calF_w & =\td_\star R_B \sum \td_\star R_{A_i} F_{v_i} \\
& = \sum \td_\star R_{A_i B} F_{v_i} \in V
\end{align*}
That is, $V$ is invariant under $\td_\star R$. Apply Lemma~\ref{lem:equivariance} to see that $\calF\colon V\hookrightarrow\gothX(\grpG)$ is equivariant. The representation \eqref{eq:input_ext_psi} is a direct consequence of \eqref{eq:calF} and \eqref{eq:Fpsi:3}.
\end{proof}

Equivariant input extensions are critical to the understanding of decompositions of general kinematic systems as they provide a structured embedding that captures the equivariant geometry of the system function even when certain velocity components are not modelled in the original measurement space $\vecV$.

The following lemma expresses equivariance in terms of an algebraic property of the lift $\Lambda$ (Def.~\ref{def:vel_lift}).

\begin{lemma}\label{lem:equiv_lift}
Let $\calF \colon V \to \gothX(\grpG)$ be an equivariant kinematic system on a Lie group $\grpG$ over a vector space $V$ with lift $\Lambda$ (Def.~\ref{def:vel_lift}).
The lift $\Lambda$ satisfies
\begin{align}\label{eq:equi_lift}
\Ad_A \Lambda(X A , \psi_A(v)) =  \Lambda(X,v)
\end{align}
for all $A, X \in \grpG$ and $v \in V$.\hfill$\square$
\end{lemma}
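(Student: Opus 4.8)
The plan is to reduce the equivariance relation \eqref{eq:equi_F} to a concrete matrix identity between tangent vectors and then manipulate it algebraically using the definition of the lift. First I would rewrite Definition~\ref{def:vel_lift} as $\calF_v(X) = X\,\Lambda(X,v)$, viewing both sides as elements of the tangent space $T_X\grpG$ (so $\Lambda(X,v)\in\gothg$). Next, I would recast the differential $\td R_A$ appearing in \eqref{eq:equi_F} in matrix terms: for a matrix Lie group, $\td R_A$ acts on a tangent vector $\xi\in T_X\grpG$ by right matrix multiplication $\xi\mapsto\xi A$, consistently with the convention $\td R_X\Lambda=\Lambda X$ recorded in Section~\ref{sec:prelim}. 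Under this identification, the equivariance hypothesis \eqref{eq:equi_F} reads $\calF_v(X)\,A = \calF_{\psi_A(v)}(XA)$ for all $A,X\in\grpG$ and $v\in V$.

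With these two ingredients the proof becomes a direct substitution. I would evaluate the lift at the translated base point and transformed input,
\begin{align*}
\Lambda(XA,\psi_A(v))
&= (XA)^{-1}\calF_{\psi_A(v)}(XA) \\
&= A^{-1}X^{-1}\,\calF_v(X)\,A \\
&= A^{-1}X^{-1}\,\bigl(X\,\Lambda(X,v)\bigr)\,A \\
&= A^{-1}\,\Lambda(X,v)\,A,
\end{align*}
where the second line uses the matrix form of equivariance just established and the third line uses $\calF_v(X)=X\Lambda(X,v)$. Finally, applying $\Ad_A$, which for a matrix Lie group is conjugation $\Ad_A\eta = A\eta A^{-1}$, yields $\Ad_A\Lambda(XA,\psi_A(v)) = A\,A^{-1}\Lambda(X,v)A\,A^{-1} = \Lambda(X,v)$, which is exactly \eqref{eq:equi_lift}.

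This lemma is essentially a bookkeeping verification rather than a deep result, so there is no serious obstacle. The only point requiring care is the correct identification of the differential $\td R_A$ with right matrix multiplication on the appropriate tangent space, and the matching identification of $\Ad_A$ with conjugation; I would state these conventions explicitly at the start so that the tangent-space membership of each expression (in particular that $\Lambda(XA,\psi_A(v))$ and $\Lambda(X,v)$ both lie in $\gothg$) is transparent. One should also note that the computation uses only the equivariance property and the definition of the lift, so no injectivity or finite-dimensionality assumption on $V$ is needed.
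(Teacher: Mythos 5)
Your proof is correct and follows essentially the same route as the paper's: both unpack $\Lambda(XA,\psi_A(v))$ via Definition~\ref{def:vel_lift}, apply the equivariance identity \eqref{eq:equi_F} once to replace $\calF_{\psi_A(v)}(XA)$ by the right-translate of $\calF_v(X)$, and then cancel. The only difference is notational: the paper works with the differential operators $\td L$, $\td R$ and their commutativity (so the argument reads off for general Lie groups), whereas you invoke the matrix identifications $\td R_A\xi = \xi A$ and $\Ad_A\eta = A\eta A^{-1}$, which is legitimate here since the paper restricts attention to matrix Lie groups and itself states these conventions in Section~\ref{sec:prelim}.
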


\begin{proof}
We have for $A, X \in \grpG$ and $v \in V$,
\begin{align}
\Ad_A & \Lambda(X A , \psi_A(v)) = \Ad_A(XA)^{-1}\calF_{\psi_A(v)}(XA) \notag \\
& = \td R_{A^{-1}}\td L_{A}\td L_{A^{-1}}\td L_{X^{-1}}\calF_{\psi_A(v)}(XA) \notag \\
& = \td L_{X^{-1}}\td R_{A^{-1}}\calF_{\psi_A(v)}(R_A(X)) \label{eq:Lambda:3} \\
& = \td L_{X^{-1}}\td R_{A^{-1}}\td R_{A}\calF_v(X) \label{eq:Lambda:4} \\
& = X^{-1}\calF_v(X) = \Lambda(X,v), \notag
\end{align}
where \eqref{eq:Lambda:3} follows because $\td L$ and $\td R$ commute and \eqref{eq:Lambda:4} follows from equivariance.
\end{proof}

\section{Invariant Systems}\label{sec:invariant_systems}

In this section we will show that while right invariant kinematic systems are also equivariant, the same does not necessarily hold for left invariant kinematic systems.\footnote{In this paper we define equivariance in terms of a right group action. If the definition is changed to a left group action, the situation with regards to left vs. right invariant systems changes accordingly.} However, we show that the equivariant input extension of a left invariant kinematic system is always left invariant.
We provide algebraic characterizations of right and left invariance in terms of conditions on the lift function $\Lambda(X,v)$, and in the equivariant case also in terms of the input action $\psi$.
As a corollary, we show that the input vector space of an invariant kinematic system can always be decomposed into subspaces that we term Type 0, Type I, or Type II corresponding to the bi-invariant, left invariant, or right invariant system components, respectively.

\begin{definition}\label{def:invariant_system} \textbf{(Invariant System)}
A kinematic system $\calF\colon V \to \gothX(\grpG)$ is
\emph{invariant} if $\image\calF\subset\gothL(\grpG)+\gothR(\grpG)$,
\emph{left invariant} if $\image\calF\subset\gothL(\grpG)$,
\emph{right invariant} if $\image\calF\subset\gothR(\grpG)$,
\emph{bi-invariant} if $\image\calF\subset\gothL(\grpG)\cap\gothR(\grpG)$, and
\emph{dual invariant} if it is invariant and
$\image\calF\cap\gothL(\grpG)\ne\{0\} \ne \image\calF\cap\gothR(\grpG)$.
\hfill$\square$\end{definition}

We start the discussion with a characterization of right invariance in terms of the lift $\Lambda$ (Def.~\ref{def:vel_lift}).

\begin{lemma}\label{lem:right_invariant}
A kinematic system $\calF\colon V \to \gothX(\grpG)$ with lift $\Lambda$ (Def.~\ref{def:vel_lift}) is right invariant if and only if $\Lambda(X,v)=\Ad_{X^{-1}}\Lambda(I,v)$ for all $X\in\grpG$ and $v\in V$.
\hfill$\square$\end{lemma}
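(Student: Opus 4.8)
The plan is to unwind the definitions on both sides and observe that the claimed identity is exactly the coordinate expression of right invariance in the matrix conventions $\td R_X \Lambda = \Lambda X$ and $\Ad_{X^{-1}}\Lambda = X^{-1}\Lambda X$. Since $\calF$ is linear and injective, the subspace inclusion $\image\calF\subset\gothR(\grpG)$ defining right invariance is equivalent to $\calF_v\in\gothR(\grpG)$ for every $v\in V$, so the argument reduces to a pointwise computation in the input variable.

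First I would prove the forward implication. Suppose $\calF$ is right invariant. Then for each $v\in V$ there is an element $\Lambda_v\in\gothg$ with $\calF_v(X)=\td R_X\Lambda_v=\Lambda_v X$ for all $X\in\grpG$. Substituting into the lift (Def.~\ref{def:vel_lift}) gives
\[
\Lambda(X,v)=X^{-1}\calF_v(X)=X^{-1}\Lambda_v X=\Ad_{X^{-1}}\Lambda_v .
\]
Evaluating at $X=\Id$ identifies $\Lambda_v=\Lambda(\Id,v)$, so $\Lambda(X,v)=\Ad_{X^{-1}}\Lambda(\Id,v)$, as required. For the converse I would run the same computation in reverse: assuming $\Lambda(X,v)=\Ad_{X^{-1}}\Lambda(\Id,v)$ for all $X,v$, set $\Lambda_v:=\Lambda(\Id,v)\in\gothg$ and left-multiply the lift identity by $X$ to obtain
\[
\calF_v(X)=X\,\Lambda(X,v)=X\,\Ad_{X^{-1}}\Lambda_v=\Lambda_v X=\td R_X\Lambda_v .
\]
Hence $\calF_v$ is the right invariant vector field generated by $\Lambda_v$, so $\calF_v\in\gothR(\grpG)$ for every $v$, i.e. $\image\calF\subset\gothR(\grpG)$, which is right invariance.

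I do not expect a genuine obstacle here, as the statement is essentially an algebraic restatement of the definition of $\gothR(\grpG)$ through the lift. The only points requiring a little care are the bookkeeping of the matrix conventions (that $\td R_X\Lambda=\Lambda X$ and that $\Ad$ acts by conjugation, $\Ad_{X^{-1}}\Lambda=X^{-1}\Lambda X$) and the initial reduction from the subspace inclusion to the pointwise condition on each $\calF_v$, which is justified by the linearity and injectivity of $\calF$.
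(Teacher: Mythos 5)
Your proof is correct and follows essentially the same route as the paper's: both directions unwind the definition of $\gothR(\grpG)$ through the lift, with the converse obtained by left-multiplying the lift identity and using $X\Ad_{X^{-1}}\Lambda_v=\Lambda_v X$. The only cosmetic difference is your appeal to linearity and injectivity of $\calF$ to reduce to a pointwise statement, which is unnecessary (the inclusion $\image\calF\subset\gothR(\grpG)$ is by definition the pointwise condition $\calF_v\in\gothR(\grpG)$ for all $v$), but this is harmless.
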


\begin{proof}
Let the system be right invariant and let $v\in V$, then $\calF_v\in\gothR(\grpG)$ and there exists $\Lambda_v\in\gothg$ such that $\calF_v(X)=\Lambda_v X$ for all $X\in\grpG$.
Then $\Lambda(X,v)=X^{-1}\calF_v(X)=X^{-1}\Lambda_v X=\Ad_{X^{-1}}\Lambda_v$ and in particular $\Lambda(I,v)=\Lambda_v$. It follows that $\Lambda(X,v)=\Ad_{X^{-1}}\Lambda(I,v)$ for all $X\in\grpG$ and $v\in V$.

Conversely, let $\Lambda(X,v)=\Ad_{X^{-1}}\Lambda(I,v)$ for all $X\in\grpG$ and $v\in V$ then $\calF_v(X)=X\Lambda(X,v)=\td L_X \Ad_{X^{-1}}\Lambda(I,v)=\td R_X\Lambda(I,v)$ for all $X\in\grpG$ and $v\in V$ and it follows that $\calF_v\in\gothR(\grpG)$ for all $v\in V$.
\end{proof}

As a consequence of Lemma~\ref{lem:right_invariant}, we show that the right invariant kinematic systems are precisely the equivariant kinematic systems with trivial input group action.

\begin{corollary}\label{cor:right_invariant}
A kinematic system $\calF\colon V \to \gothX(\grpG)$ is right invariant if and only if it is equivariant with trivial input group action $\psi_A(v)=v$ for all $A\in\grpG$ and $v\in V$.
\hfill$\square$\end{corollary}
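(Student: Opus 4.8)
The plan is to prove both implications by translating the hypotheses into statements about the lift $\Lambda$ (Def.~\ref{def:vel_lift}), using the characterisation of right invariance in Lemma~\ref{lem:right_invariant} together with the equivariant lift identity of Lemma~\ref{lem:equiv_lift}. Since the two lemmas already encode right invariance and equivariance as algebraic conditions on $\Lambda$, the corollary should reduce to specialising those conditions.

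For the forward implication I would assume $\calF$ is right invariant. By Lemma~\ref{lem:right_invariant} (or directly from $\image\calF\subset\gothR(\grpG)$) each $\calF_v$ is the right invariant vector field $\calF_v(X)=\Lambda(I,v)\,X$. I would then simply define $\psi_A(v):=v$, observe that this is trivially a right group action, and verify the equivariance condition \eqref{eq:equi_F} by the one-line computation
\[
\td R_A \calF_v(X) = \Lambda(I,v)\,XA = \calF_v(XA) = \calF_{\psi_A(v)}(R_A(X)),
\]
using that $\td R_A$ acts on a tangent vector by right matrix multiplication by $A$. This establishes that $\calF$ is equivariant with trivial input action.

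For the converse I would assume $\calF$ is equivariant with $\psi_A(v)=v$. Substituting the trivial action into the equivariant lift identity \eqref{eq:equi_lift} gives $\Ad_A \Lambda(XA,v)=\Lambda(X,v)$ for all $A,X\in\grpG$ and $v\in V$. Specialising to $X=I$ yields $\Lambda(A,v)=\Ad_{A^{-1}}\Lambda(I,v)$ for all $A$ and $v$, which after renaming $A$ to $X$ is exactly the condition in Lemma~\ref{lem:right_invariant}; hence $\image\calF\subset\gothR(\grpG)$ and $\calF$ is right invariant.

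The computations are routine and there is no genuine obstacle: the entire content of the corollary is already carried by the two lift lemmas, so the proof is essentially a matter of specialising them. The only point meriting a remark is that ``equivariant with trivial input action'' is a well-posed hypothesis, which is guaranteed by the uniqueness of the input action noted after \eqref{eq:Fpsi:3}.
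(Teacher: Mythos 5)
Your proof is correct and takes essentially the same route as the paper: both implications are reduced to the lift characterisation of right invariance in Lemma~\ref{lem:right_invariant}, with the forward direction verifying \eqref{eq:equi_F} for the trivial action by direct computation with $\calF_v(X)=\Lambda(I,v)X$. The only cosmetic difference is in the converse, where you specialise the equivariant lift identity of Lemma~\ref{lem:equiv_lift} to the trivial action, whereas the paper re-derives that identity inline from the equivariance condition \eqref{eq:equi_F}; the underlying computation is identical, so this is a mild streamlining rather than a different argument.
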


\begin{proof}
Let the system be right invariant and let $A\in\grpG$ and $v\in V$. By Lemma~\ref{lem:right_invariant} then
\begin{align*}
\td R_A & \calF_v(X) = \td R_A X\Lambda(X,v) = \td R_A\td L_X\Ad_{X^{-1}}\Lambda(I,v) \\
&= \td R_A\td R_X\Lambda(I,v) = \td R_{XA}\Lambda(I,v) \\
&= \td L_{XA}\td L_{(XA)^{-1}}\td R_{XA}\Lambda(I,v) \\
&= \td L_{XA}\Ad_{(XA)^{-1}}\Lambda(I,v) = XA\Lambda(XA,v) \\
&= \calF_v(R_A(X))
\end{align*}
and it follows that the system is equivariant with input group action $\psi_A(v)=v$ for all $A\in\grpG$ and $v\in V$.

Conversely, let the system be equivariant with input group action $\psi_A(v)=v$ for all $A\in\grpG$ and $v\in V$. Compute
\begin{align*}
\Lambda(X,v) &= X^{-1}\calF_v(X) \\
&= \td L_{X^{-1}}\td R_X\td R_{X^{-1}}\calF_v(R_X(I)) \\
&= \td L_{X^{-1}}\td R_X\td R_{X^{-1}}\calF_{\psi_X(v)}(R_X(I)) \\
&= \td L_{X^{-1}}\td R_X\calF_v(I) = \Ad_{X^{-1}}\Lambda(I,v)
\end{align*}
and apply Lemma~\ref{lem:right_invariant}.
\end{proof}

\begin{remark}
A simple consequence of Corollary~\ref{cor:right_invariant} is that the equivariant input extension (Def.~\ref{def:input_ext}) of a right invariant kinematic system is the system itself. That means that right invariant kinematic systems correspond precisely to the (finite dimensional) subrepresentations of the representation $\td_\star R$ of $\grpG$ on $\gothX(\grpG)$ that are entirely contained in $\gothR(\grpG)$.
\hfill$\square$\end{remark}

We now turn our attention to left invariance and give a characterization in terms of the lift $\Lambda$ (Def.~\ref{def:vel_lift}).

\begin{lemma}\label{lem:left_invariant}
A kinematic system $\calF\colon V \to \gothX(\grpG)$ with lift $\Lambda$ (Def.~\ref{def:vel_lift}) is left invariant if and only if $\Lambda(X,v)=\Lambda(I,v)$ for all $X\in\grpG$ and $v\in V$.
\hfill$\square$\end{lemma}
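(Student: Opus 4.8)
The plan is to mirror the proof of Lemma~\ref{lem:right_invariant}, noting that the left invariant case is in fact simpler because no $\Ad$ conjugation appears. Recall that the lift is $\Lambda(X,v)=X^{-1}\calF_v(X)=\td L_{X^{-1}}\calF_v(X)\in\gothg$, and that $\calF$ is left invariant precisely when $\calF_v\in\gothL(\grpG)$ for every $v\in V$, i.e.\ when there is a fixed $\Lambda_v\in\gothg$ (independent of $X$) with $\calF_v(X)=\td L_X\Lambda_v=X\Lambda_v$ for all $X\in\grpG$.

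For the forward direction I would assume left invariance, fix $v\in V$, and substitute $\calF_v(X)=X\Lambda_v$ directly into the definition of the lift. This gives $\Lambda(X,v)=X^{-1}X\Lambda_v=\Lambda_v$, which is constant in $X$; evaluating at $X=I$ identifies the constant as $\Lambda_v=\Lambda(I,v)$, so that $\Lambda(X,v)=\Lambda(I,v)$ for all $X\in\grpG$.

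For the converse I would assume $\Lambda(X,v)=\Lambda(I,v)$ for all $X$ and reconstruct the vector field from the lift via $\calF_v(X)=X\Lambda(X,v)=X\Lambda(I,v)=\td L_X\Lambda(I,v)$. The right-hand side is exactly the left invariant vector field $\td L_\grpG\Lambda(I,v)$ associated to the fixed Lie algebra element $\Lambda(I,v)\in\gothg$, so $\calF_v\in\gothL(\grpG)$ for every $v$ and the system is left invariant by Definition~\ref{def:invariant_system}.

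I do not expect any genuine obstacle here: unlike Lemma~\ref{lem:right_invariant}, where right invariance forces the nontrivial $X$-dependence $\Lambda(X,v)=\Ad_{X^{-1}}\Lambda(I,v)$, left invariance simply makes the lift constant in its base-point argument. The only point to handle with care is the direction of the isomorphism $\gothg\cong\gothL(\grpG)$, $\Lambda\mapsto\td L_\grpG\Lambda$, together with the matrix-group convention $\td L_X\Lambda=X\Lambda$, both of which are recorded in Section~\ref{sec:prelim}.
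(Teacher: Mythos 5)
Your proof is correct and follows essentially the same argument as the paper: substitute $\calF_v(X)=X\Lambda_v$ into the lift for the forward direction, and reconstruct $\calF_v(X)=\td L_X\Lambda(I,v)\in\gothL(\grpG)$ for the converse. The paper's proof is word-for-word the same two-step computation, so there is nothing to add.
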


\begin{proof}
Let the system be left invariant and let $v\in V$, then $\calF_v\in\gothL(\grpG)$ and there exists $\Lambda_v\in\gothg$ such that $\calF_v(X)=X\Lambda_v$ for all $X\in\grpG$.
Then $\Lambda(X,v)=X^{-1}\calF_v(X)=X^{-1}X\Lambda_v=\Lambda_v$ and in particular $\Lambda(I,v)=\Lambda_v$. It follows that $\Lambda(X,v)=\Lambda(I,v)$ for all $X\in\grpG$ and $v\in V$.

Conversely, let $\Lambda(X,v)=\Lambda(I,v)$ for all $X\in\grpG$ and $v\in V$ then $\calF_v(X)=X\Lambda(X,v)=\td L_X \Lambda(I,v)$ for all $X\in\grpG$ and $v\in V$ and it follows that $\calF_v\in\gothL(\grpG)$ for all $v\in V$.
\end{proof}

Contrary to the right invariant case, left invariant systems are not always equivariant, however, we can still characterize the left invariant equivariant kinematic systems in terms of how the input group action acts at $X=I$.

\begin{corollary}\label{cor:left_invariant}
An equivariant kinematic system $\calF\colon V \to \gothX(\grpG)$ is left invariant if and only if $\Ad_{A^{-1}}\calF_v(I)=\calF_{\psi_A(v)}(I)$ for all $A\in\grpG$ and $v\in V$.
\hfill$\square$\end{corollary}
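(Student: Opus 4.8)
The plan is to translate everything into statements about the lift $\Lambda$ (Def.~\ref{def:vel_lift}) evaluated at the identity, where $\Lambda(I,v)=\calF_v(I)$, so that the proposed condition reads $\Ad_{A^{-1}}\Lambda(I,v)=\Lambda(I,\psi_A(v))$, and then to play this off against the general equivariant lift identity of Lemma~\ref{lem:equiv_lift}. Specialising that identity at $X=I$ gives $\Ad_A\Lambda(A,\psi_A(v))=\Lambda(I,v)$, i.e. $\Lambda(A,\psi_A(v))=\Ad_{A^{-1}}\Lambda(I,v)$ for all $A\in\grpG$ and $v\in V$. This single formula is the hinge of both directions, and setting it up is the only preparatory work required.

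First I would prove the forward direction. Assuming left invariance, Lemma~\ref{lem:left_invariant} gives $\Lambda(A,\psi_A(v))=\Lambda(I,\psi_A(v))$, and substituting into the hinge identity yields $\Lambda(I,\psi_A(v))=\Ad_{A^{-1}}\Lambda(I,v)$. Rewriting $\Lambda(I,\cdot)=\calF_{\cdot}(I)$ recovers exactly $\Ad_{A^{-1}}\calF_v(I)=\calF_{\psi_A(v)}(I)$, as required.

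For the converse I would run the same computation backwards: the hypothesis $\Ad_{A^{-1}}\calF_v(I)=\calF_{\psi_A(v)}(I)$ is precisely $\Lambda(I,\psi_A(v))=\Ad_{A^{-1}}\Lambda(I,v)$, which combined with the hinge identity forces $\Lambda(A,\psi_A(v))=\Lambda(I,\psi_A(v))$ for all $A,v$. The remaining task is to upgrade this to $\Lambda(X,w)=\Lambda(I,w)$ for \emph{all} $X\in\grpG$ and \emph{all} $w\in V$, at which point Lemma~\ref{lem:left_invariant} closes the argument. The key step, and the one place where care is needed, is to observe that every pair $(X,w)$ arises as $(A,\psi_A(v))$: since $\psi$ is a right group action each $\psi_A$ is a linear isomorphism with inverse $\psi_{A^{-1}}$ (because $\psi_{A^{-1}}\circ\psi_A=\psi_{AA^{-1}}=\psi_I=\id$), so choosing $A=X$ and $v=\psi_{X^{-1}}(w)$ gives $\psi_A(v)=w$. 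Without this surjectivity observation one would only control $\Lambda(X,\cdot)$ on the image of $\psi_X$, which is not enough; it is precisely the invertibility of the input action that makes the argument go through.
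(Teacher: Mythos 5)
Your proof is correct, and it takes a genuinely different route from the paper's. The paper never invokes Lemma~\ref{lem:equiv_lift}: both directions are proved by direct tangent-map computations from Definition~\ref{def:equi_sys} (forward: $\calF_{\psi_A(v)}(I)=\calF_{\psi_A(v)}(R_A(A^{-1}))=\td R_A \calF_v(A^{-1})=\Ad_{A^{-1}}\calF_v(I)$, using Lemma~\ref{lem:left_invariant}; converse: expanding $\Lambda(X,v)=X^{-1}\calF_v(X)$ and applying equivariance and the hypothesis, both at $A=X^{-1}$). You instead encapsulate all the equivariance in the single ``hinge'' identity $\Lambda(A,\psi_A(v))=\Ad_{A^{-1}}\Lambda(I,v)$, obtained from Lemma~\ref{lem:equiv_lift} at $X=I$, and then argue purely algebraically in the lift, closing both directions with Lemma~\ref{lem:left_invariant}. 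What your route buys is that no $\td L$/$\td R$ bookkeeping is repeated, since it is already done inside Lemma~\ref{lem:equiv_lift}; what it costs is that your converse only yields $\Lambda(A,\psi_A(v))=\Lambda(I,\psi_A(v))$ on pairs of the special form $(A,\psi_A(v))$, so you need surjectivity of each $\psi_A$ (with inverse $\psi_{A^{-1}}$, by the group action axioms) to reach all pairs $(X,w)$ --- a step you correctly identify and justify, and which the paper sidesteps by building the substitution $A=X^{-1}$, $v\mapsto\psi_{X^{-1}}(v)$ directly into its computation. One small remark: you call $\psi_A$ a \emph{linear} isomorphism, but only bijectivity is actually used; linearity does hold (since $\calF_{\psi_A(v)}=\td_\star R_A\calF_v$ with $\td_\star R_A$ linear and $\calF$ injective), but it plays no role in your argument.
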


\begin{proof}
Let the equivariant kinematic system $\calF\colon V \to \gothX(\grpG)$ be left invariant and let $v\in V$. By Lemma~\ref{lem:left_invariant} then $\calF_v(X)=X\Lambda(X,v)=X\Lambda(I,v)$ for all $X\in\grpG$. Let $A\in\grpG$ then
\begin{align*}
\calF_{\psi_A(v)}(I) &= \calF_{\psi_A(v)}(R_A(A^{-1})) =\td R_A\calF_v(A^{-1}) \\
&= \td R_A A^{-1}\Lambda(I,v) = \Ad_{A^{-1}}\Lambda(I,v) \\
&= \Ad_{A^{-1}}\calF_v(I)
\end{align*}
as required.

Conversely, let $\calF\colon V \to \gothX(\grpG)$ be an equivariant kinematic system such that $\Ad_{A^{-1}}\calF_v(I)=\calF_{\psi_A(v)}(I)$ for all $A\in\grpG$ and $v\in V$.
For $X\in\grpG$ and $v\in V$ compute
\begin{align*}
\Lambda(X,v) &= X^{-1}\calF_v(X) =\td L_{X^{-1}}\td R_X\td R_{X^{-1}}\calF_v(X) \\
&= \td L_{X^{-1}}\td R_X\calF_{\psi_{X^{-1}}(v)}(R_{X^{-1}}(X)) \\
&= \Ad_{X^{-1}}\calF_{\psi_{X^{-1}}(v)}(I) = \Ad_{X^{-1}}\Ad_X\calF_v(I) \\
&= \calF_v(I) = \Lambda(I,v)
\end{align*}
and apply Lemma~\ref{lem:left_invariant}.
\end{proof}

While a left invariant kinematic system need not be equivariant, its equivariant input extension (Def.~\ref{def:input_ext}) is also left invariant as the following lemma shows.

\begin{lemma}\label{lem:left_invariant_input_ext}
Let $F\colon \vecV \to \gothX(\grpG)$ be a kinematic system on a Lie group $\grpG$ over a vector space $\vecV$. If the system is left invariant then its equivariant input extension (Def.~\ref{def:input_ext}) is left invariant.
\hfill$\square$\end{lemma}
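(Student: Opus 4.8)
The plan is to show directly that the spanning vector fields in Definition~\ref{def:input_ext} are themselves left invariant, whence the whole extension $V$ stays inside $\gothL(\grpG)$. Since $\gothL(\grpG)\subset\gothX(\grpG)$ is a vector subspace and $V$ is by definition the span of the vector fields $\td_\star R_A F_v$ for $v\in\vecV$ and $A\in\grpG$, it suffices to verify that each such generator lies in $\gothL(\grpG)$. This reduces the lemma to the single claim that $\gothL(\grpG)$ is invariant under the action $\td_\star R$, after which $\image F\subset\gothL(\grpG)$ forces every generator, and hence their span $V$, into $\gothL(\grpG)$.

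To establish the claim I would use the characterization of left invariance from the proof of Lemma~\ref{lem:left_invariant}: since $F$ is left invariant, for each $v\in\vecV$ there is a $\Lambda_v\in\gothg$ with $F_v(X)=\td L_X\Lambda_v=X\Lambda_v$ for all $X\in\grpG$. I would then compute $\td_\star R_A F_v$ from the definition \eqref{eq:overline_dphi}, evaluating at an arbitrary $X\in\grpG$ and using the matrix Lie group identities $\td R_A\Lambda=\Lambda A$ and $\td L_X\Lambda=X\Lambda$ from Section~\ref{sec:prelim}:
\begin{align*}
(\td_\star R_A F_v)(X)
&= \td R_A\cdot F_v\bigl(R_{A^{-1}}(X)\bigr)
 = \td R_A\cdot\bigl((XA^{-1})\Lambda_v\bigr) \\
&= (XA^{-1})\Lambda_v A
 = X\,\Ad_{A^{-1}}\Lambda_v.
\end{align*}
Reading this as $X\mapsto\td L_X(\Ad_{A^{-1}}\Lambda_v)$ exhibits $\td_\star R_A F_v$ as the left invariant vector field associated with the Lie algebra element $\Ad_{A^{-1}}\Lambda_v\in\gothg$, so $\td_\star R_A F_v\in\gothL(\grpG)$. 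Taking the span over all $v$ and $A$ then gives $V\subset\gothL(\grpG)$, i.e.\ the extended system $\calF$ is left invariant.

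I expect essentially no serious obstacle here: the content is the short computation above, and the only thing to be careful about is the correct placement of the differentials $\td R_A$ and $\td L_X$ for matrix groups. One conceptual point worth isolating is that the displayed calculation is exactly the statement that $\gothL(\grpG)$ is a $\td_\star R$-invariant subspace of $\gothX(\grpG)$; phrased that way the lemma is immediate from the minimality of the equivariant input extension, since $\image F\subset\gothL(\grpG)$ and $V$ is the smallest $\td_\star R$-invariant subspace containing $\image F$.
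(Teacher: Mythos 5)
Your proof is correct, and its computational core --- applying $\td_\star R_A$ to a left invariant field $F_v(X)=X\Lambda_v$ and obtaining the left invariant field $X\mapsto X\Ad_{A^{-1}}\Lambda_v$ --- is exactly the computation at the heart of the paper's proof, where it appears as the definition $\Gamma_i:=A_i^{-1}\Lambda_{v_i}A_i$ and the identity $\calF_{\psi_A(v)}(X)=\sum X\Gamma_i$. Where you genuinely differ is in how the conclusion is drawn. You observe that the computation says $\gothL(\grpG)$ is a $\td_\star R$-invariant subspace of $\gothX(\grpG)$, so the span $V$ of the generators $\td_\star R_A F_v$ stays inside $\gothL(\grpG)$, and left invariance of the extension then follows directly from Definition~\ref{def:invariant_system}; no auxiliary result is needed. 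The paper instead routes the conclusion through Corollary~\ref{cor:left_invariant}: having established $\calF_{\psi_A(v)}(X)=\sum X\Gamma_i$, it goes on to verify the algebraic criterion $\Ad_{A^{-1}}\calF_v(I)=\calF_{\psi_A(v)}(I)$ for the extension's input action $\psi$. Your route is shorter and arguably more transparent --- indeed, once $\calF_{\psi_A(v)}(X)=\sum X\Gamma_i$ holds for all $X$, left invariance of $\calF_{\psi_A(v)}$ is already in hand (take $A=\Id$ to cover every $v\in V$), so the paper's detour through the corollary is logically redundant. What the paper's phrasing buys is an explicit record of how the input action $\psi$ interacts with left invariance, which feeds the remark following the lemma that left invariant equivariant systems are precisely the subrepresentations of $\td_\star R$ contained in $\gothL(\grpG)$ --- a structural point your closing observation makes in essentially equivalent form via the minimality of the span.
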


\begin{proof}
Let the kinematic system $F\colon \vecV \to \gothX(\grpG)$ be left invariant and let $\calF\colon V \to \gothX(\grpG)$ be its equivariant input extension. Let $A,X\in\grpG$ and $v\in V$ then $\psi_A(v)\in V$ and there exist $A_i\in\grpG$, $v_i\in\vecV$ and $\Lambda_{v_i}\in\gothg$ such that
\begin{align*}
\calF_{\psi_A(v)}(X) & = \sum \td_\star R_{A_i} F_{v_i}(X) \\
& = \sum \td R_{A_i} F_{v_i}(XA_i^{-1}) \\
& = \sum \td R_{A_i} XA_i^{-1}\Lambda_{v_i}.
\end{align*}
Define $\Gamma_i:=A_i^{-1}\Lambda_{v_i}A_i\in\gothg$ then
\begin{align*}
\calF_{\psi_A(v)}(X) = \sum X\Gamma_i.
\end{align*}
Compute
\begin{align*}
\Ad_{A^{-1}} \calF_v(I) &= \td L_{A^{-1}} \td R_A \calF_v(I) \\
& = \td L_{A^{-1}} \calF_{\psi_A(v)}(R_A(I)) \\
& = \td L_{A^{-1}} \sum A\Gamma_i = \sum \Gamma_i = \calF_{\psi_A(v)}(I),
\end{align*}
and it follows from Corollary~\ref{cor:left_invariant} that the extended system $\calF\colon V \to \gothX(\grpG)$ is left invariant.
\end{proof}

\begin{remark}
The above analysis shows that left invariant equivariant kinematic systems correspond precisely to the (finite dimensional) subrepresentations of the representation $\td_\star R$ of $\grpG$ on $\gothX(\grpG)$ that are entirely contained in $\gothL(\grpG)$.
\hfill$\square$\end{remark}

Putting these results together, we arrive at the following characterization of bi-invariant kinematic systems.

\begin{corollary}\label{cor:bi-invariant}
A kinematic system $\calF\colon V \to \gothX(\grpG)$ with lift $\Lambda$ (Def.~\ref{def:vel_lift}) is bi-invariant if and only if $\Lambda(X,v)=\Ad_{X^{-1}}\Lambda(I,v)=\Lambda(I,v)$ for all $X\in\grpG$ and $v\in V$ or, equivalently, if it is equivariant with trivial input group action $\psi_A(v)=v$ and $\Ad_{A^{-1}}\calF_v(I)=\calF_v(I)$ for all $A\in\grpG$ and $v\in V$.
\hfill$\square$\end{corollary}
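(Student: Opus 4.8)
The plan is to assemble the corollary directly from the two lift characterizations and the two equivariance characterizations already established, since bi-invariance is by definition the simultaneous validity of left and right invariance. First I would observe that $\image\calF\subset\gothL(\grpG)\cap\gothR(\grpG)$ holds if and only if $\image\calF\subset\gothL(\grpG)$ and $\image\calF\subset\gothR(\grpG)$, that is, if and only if the system is both left invariant and right invariant. With this reduction in hand, each of the two stated characterizations follows by pairing the corresponding left and right results.

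For the first characterization I would invoke Lemma~\ref{lem:left_invariant} and Lemma~\ref{lem:right_invariant}. Left invariance is equivalent to $\Lambda(X,v)=\Lambda(I,v)$ for all $X,v$, and right invariance is equivalent to $\Lambda(X,v)=\Ad_{X^{-1}}\Lambda(I,v)$. Requiring both simultaneously is exactly the stated chain $\Lambda(X,v)=\Ad_{X^{-1}}\Lambda(I,v)=\Lambda(I,v)$. The converse is immediate: if this chain holds then each of the two lemma conditions holds separately, so the system is both left and right invariant, hence bi-invariant.

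For the second characterization I would use Corollary~\ref{cor:right_invariant} to replace right invariance by the equivalent statement that the system is equivariant with trivial input action $\psi_A(v)=v$. Once the system is known to be equivariant, Corollary~\ref{cor:left_invariant} applies and characterizes left invariance by $\Ad_{A^{-1}}\calF_v(I)=\calF_{\psi_A(v)}(I)$; substituting the trivial action turns the right hand side into $\calF_v(I)$, giving the stated condition $\Ad_{A^{-1}}\calF_v(I)=\calF_v(I)$. Conversely, the trivial-action hypothesis makes the system right invariant by Corollary~\ref{cor:right_invariant}, and the $\Ad$ condition then makes the (now equivariant) system left invariant by Corollary~\ref{cor:left_invariant}, so together they force bi-invariance.

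The only point requiring care — and the nearest thing to an obstacle in an otherwise purely bookkeeping argument — is the order in which the biconditionals are chained in the second characterization. Corollary~\ref{cor:left_invariant} presupposes that the system is equivariant, so I must first invoke Corollary~\ref{cor:right_invariant} to supply equivariance from the trivial-action hypothesis before the left-invariance characterization can be applied. With that dependency respected, the whole statement reduces to combining the already-proved equivalences.
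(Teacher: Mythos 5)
Your proposal is correct and matches the paper's intent exactly: the paper offers no separate proof of this corollary, introducing it with ``Putting these results together,'' and the intended assembly is precisely yours --- bi-invariance split as left plus right invariance, the lift chain from Lemma~\ref{lem:left_invariant} and Lemma~\ref{lem:right_invariant}, and the equivariant characterization from Corollary~\ref{cor:right_invariant} followed by Corollary~\ref{cor:left_invariant} with the trivial action substituted. Your care about supplying equivariance via Corollary~\ref{cor:right_invariant} before invoking Corollary~\ref{cor:left_invariant} is exactly the right dependency to respect.
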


Using the above results we can now derive a structure theorem for invariant kinematic systems in terms of a direct sum decomposition of the input vector space.

\begin{definition}\label{def:type} \textbf{(Type 0, Type I and Type II)}
Let $\calF\colon V \to \gothX(\grpG)$ be a kinematic system with lift $\Lambda$ (Def.~\ref{def:vel_lift}).
\begin{List}
\item[Type I:]  A \emph{Type I invariant velocity subspace} is a subspace $V^I \subset V$ such that for all $X\in\grpG$ and $v\in V^I$,
    \begin{align}
      \Lambda(X,v) = \Lambda(\Id,v).
      \label{eq:Type_I}
    \end{align}

\item[Type II:]  A \emph{Type II invariant velocity subspace} is a subspace $V^{II} \subset V$ such that for all $X\in\grpG$ and $v\in V^{II}$,
    \begin{align}
      \Lambda(X,v) = \Ad_{X^{-1}}\Lambda(\Id,v).
      \label{eq:Type_II}
    \end{align}

\item[Type 0:]  A \emph{Type 0 invariant velocity subspace} is a subspace $V^{0}\subset V$ such that for all $X\in\grpG$ and $v\in V^{0}$ both \eqref{eq:Type_I} and \eqref{eq:Type_II} hold.\hfill$\square$
\end{List}
\end{definition}

The following is then a simple corollary of Lemma~\ref{lem:right_invariant}, Lemma~\ref{lem:left_invariant} and Corollary~\ref{cor:bi-invariant}.

\begin{corollary}\label{cor:inv_sys}
Let $\calF \colon V \to \gothX(\grpG)$ be a kinematic system with lift $\Lambda$  (Def.~\ref{def:vel_lift}). Let $V^{0}$, $V^I$, $V^{II} \subset V$ denote Type 0, Type I and Type II invariant velocity subspaces, respectively.
\begin{enumerate}
\item  The kinematic system $\calF^0 \colon V^0 \to \gothB(\grpG) \subset \gothX(\grpG)$ obtained by restricting $\calF$ to $V^0$ is bi-invariant. The velocity space $V^0$ is finite dimensional with $\dim V^0\leq\dim\grpG$.

\item The kinematic system $\calF^I \colon V^I \to \gothL(\grpG) \subset \gothX(\grpG)$ obtained by restricting $\calF$ to $V^I$ is left invariant.  The velocity space $V^I$ is finite dimensional with $\dim V^I\leq\dim\grpG$.

\item The kinematic system $\calF^{II} \colon V^{II} \to \gothR(\grpG) \subset \gothX(\grpG)$ obtained by restricting $\calF$ to $V^{II}$ is right invariant. The velocity space $V^{II}$ is finite dimensional with $\dim V^{II}\leq\dim\grpG$.
\end{enumerate}
Conversely, if $\vecV\subset V$ is a vector subspace and the kinematic system $F\colon\vecV\to\gothX(\grpG)$ obtained by restricting $\calF$ to $\vecV$ is bi-invariant (resp. left invariant resp. right invariant) then $\vecV$ is a Type 0 (resp. Type I resp. Type II) invariant velocity subspace.
\hfill$\square$\end{corollary}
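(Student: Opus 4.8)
The plan is to reduce each assertion to the lift characterizations already established, since the three Type conditions of Definition~\ref{def:type} are, verbatim, the lift conditions appearing in Lemma~\ref{lem:right_invariant}, Lemma~\ref{lem:left_invariant} and Corollary~\ref{cor:bi-invariant}. The key preliminary observation is that the lift of a restricted system $\calF^j \colon V^j \to \gothX(\grpG)$ (for $j \in \{0, I, II\}$) is simply the restriction of the original lift $\Lambda$ to $\grpG \times V^j$, because $\Lambda(X,v) = X^{-1}\calF_v(X)$ depends on $v$ only through $\calF_v$.

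First I would treat the Type I case. By definition of a Type I invariant velocity subspace, $\Lambda(X,v) = \Lambda(\Id,v)$ holds for all $X \in \grpG$ and all $v \in V^I$. This is exactly the hypothesis of Lemma~\ref{lem:left_invariant} applied to the system $\calF^I \colon V^I \to \gothX(\grpG)$ whose lift is the restriction of $\Lambda$ to $\grpG \times V^I$, so that lemma yields $\image \calF^I \subset \gothL(\grpG)$, i.e. $\calF^I$ is left invariant. The Type II and Type 0 cases follow identically using Lemma~\ref{lem:right_invariant} and Corollary~\ref{cor:bi-invariant} respectively, the latter because the Type 0 condition is precisely the conjunction of the two lift conditions characterising bi-invariance, so that $\image \calF^0 \subset \gothB(\grpG)$.

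For the dimension bounds I would use injectivity. Since $\kernel \calF = \{0\}$, every restriction $\calF^j$ is injective, whence $\dim V^j = \dim \image \calF^j$. For $j = I$ the image lies in $\gothL(\grpG)$, which is isomorphic to $\gothg = T_\Id \grpG$ via $\Lambda \mapsto \td L_\grpG \Lambda$ and therefore has dimension $\dim \grpG$; it follows that $\dim V^I \leq \dim \gothL(\grpG) = \dim \grpG$. The same count with $\gothR(\grpG) \cong \gothg$ gives $\dim V^{II} \leq \dim \grpG$, and with the finite dimensional space $\gothB(\grpG)$ of bi-invariant fields (parametrised by the center $\gothz$) gives $\dim V^0 \leq \dim \gothB(\grpG) \leq \dim \grpG$.

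Finally, the converse is obtained by reading the same three results in the opposite direction: if the restricted system $F \colon \vecV \to \gothX(\grpG)$ is left invariant, Lemma~\ref{lem:left_invariant} forces its lift to satisfy $\Lambda(X,v) = \Lambda(\Id,v)$ for all $X \in \grpG$ and $v \in \vecV$, which is exactly the statement that $\vecV$ is a Type I invariant velocity subspace, and the bi-invariant and right invariant cases are analogous. I do not anticipate a genuine obstacle here, since the content is entirely a matter of matching definitions to the cited characterizations; the one point requiring care is the dimension argument, where both the injectivity of $\calF$ and the finite-dimensionality of $\gothL(\grpG)$, $\gothR(\grpG)$ and $\gothB(\grpG)$ (as images of $\gothg$, respectively $\gothz$) must be invoked together.
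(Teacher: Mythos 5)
Your proposal is correct and follows exactly the route the paper intends: the paper offers no written proof beyond declaring the corollary a direct consequence of Lemma~\ref{lem:right_invariant}, Lemma~\ref{lem:left_invariant} and Corollary~\ref{cor:bi-invariant}, which is precisely the reduction you carry out. Your added details --- that the lift of a restricted system is the restriction of the lift, and that the dimension bounds follow from the standing assumption $\kernel\calF=\{0\}$ together with $\dim\gothL(\grpG)=\dim\gothR(\grpG)=\dim\grpG$ and $\gothB(\grpG)\subset\gothL(\grpG)$ --- are exactly the routine verifications the paper leaves implicit.
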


Using Corollary~\ref{cor:right_invariant} and Corollary~\ref{cor:inv_sys}, we immediately obtain an alternative characterization of Type II invariant velocity subspaces for an equivariant kinematic system in terms of the input action $\psi$.

\begin{corollary}\label{cor:type_psi}
For an equivariant kinematic system, \eqref{eq:Type_II} is equivalent to
\begin{align}\label{eq:Type_II_psi}
  \psi_A(v)=v
\end{align}
for all $A\in\grpG$ and $v\in V^{II}$.
\hfill$\square$\end{corollary}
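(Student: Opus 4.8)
The plan is to prove the equivalence element by element: for a fixed $v\in V^{II}$ I would relate the lift condition \eqref{eq:Type_II} to the behaviour of the single vector field $\calF_v$ under $\td_\star R$, and then read off the input action through the defining identity \eqref{eq:Fpsi:3}. Two ingredients are already available and do all the work. First, Lemma~\ref{lem:right_invariant} (applied to the single input $v$) says that \eqref{eq:Type_II} holds for $v$ and all $X\in\grpG$ exactly when $\calF_v\in\gothR(\grpG)$. Second, since the ambient system is equivariant, \eqref{eq:Fpsi:3} gives $\calF_{\psi_A(v)}=\td_\star R_A\calF_v$ for every $A\in\grpG$.

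The bridge between these is the elementary fact underlying Corollary~\ref{cor:right_invariant}: a vector field $F\in\gothX(\grpG)$ satisfies $\td_\star R_A F=F$ for all $A\in\grpG$ if and only if $F\in\gothR(\grpG)$. Indeed, unwinding \eqref{eq:overline_dphi}, the fixed-point condition reads $\td R_A F(X)=F(XA)$ for all $A,X$, and evaluating at $X=\Id$ gives $F(A)=F(\Id)A$, i.e.\ $F=\td R_\grpG F(\Id)$ is right invariant; the converse is immediate. Combined with Lemma~\ref{lem:right_invariant}, this converts \eqref{eq:Type_II} into the single statement ``$\td_\star R_A\calF_v=\calF_v$ for all $A$''.

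With this in hand both implications are short, and they package the content of Corollary~\ref{cor:inv_sys} (Type II subspaces correspond to right invariant restrictions) together with Corollary~\ref{cor:right_invariant} (right invariant $\Leftrightarrow$ trivial input action). If \eqref{eq:Type_II} holds for $v$, then $\td_\star R_A\calF_v=\calF_v$, so \eqref{eq:Fpsi:3} yields $\calF_{\psi_A(v)}=\calF_v$; since $\calF$ is injective (Section~\ref{sec:kinematics}), this forces $\psi_A(v)=v$, which is \eqref{eq:Type_II_psi}. Conversely, if $\psi_A(v)=v$ for all $A$, then \eqref{eq:Fpsi:3} gives $\calF_v=\calF_{\psi_A(v)}=\td_\star R_A\calF_v$, so $\calF_v$ is $\td_\star R$-fixed, hence $\calF_v\in\gothR(\grpG)$, and Lemma~\ref{lem:right_invariant} returns \eqref{eq:Type_II}.

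The only genuinely delicate point is the bookkeeping hidden in ``apply Corollary~\ref{cor:right_invariant}'': that corollary is phrased for an entire system, whereas here I only want to act on the subspace $V^{II}$, and the input action $\psi$ need not a priori preserve $V^{II}$. The forward direction is where this bites, since one cannot restrict the equivariant structure to $V^{II}$ before knowing $\psi$ is trivial there. I would sidestep the issue by arguing pointwise in $v$ via \eqref{eq:Fpsi:3} together with injectivity of $\calF$ (equivalently, the uniqueness of the input action noted after \eqref{eq:Fpsi:3}), rather than trying to restrict the system first; this is the step I expect to be the main obstacle and the one I would be most careful to write out.
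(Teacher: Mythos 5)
Your proposal is correct, and at its core it travels the same route as the paper: \eqref{eq:Type_II} $\Leftrightarrow$ right invariance of the fields $\calF_v$ for $v\in V^{II}$ $\Leftrightarrow$ trivial input action, which is precisely the content of the paper's one-line proof citing Corollary~\ref{cor:inv_sys} and Corollary~\ref{cor:right_invariant}. The difference is one of care rather than of strategy, and it is a genuine improvement. The paper's citation implicitly applies Corollary~\ref{cor:right_invariant} to the restricted system $\calF^{II}\colon V^{II}\to\gothX(\grpG)$ and then identifies the trivial action of that restricted system with the ambient action $\psi$ evaluated on $V^{II}$; as you observe, that identification is the delicate point, since $\psi$ is not known a priori to preserve $V^{II}$ and the corollary is phrased for a whole system, not a subspace. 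Your pointwise argument closes this gap cleanly: Lemma~\ref{lem:right_invariant} applied to the span of a single $v$ converts \eqref{eq:Type_II} into $\calF_v\in\gothR(\grpG)$; your elementary fixed-point computation (which is correct: $\td_\star R_A F=F$ for all $A$ evaluated at $X=\Id$ gives $F(A)=F(\Id)A$, and conversely) converts that into $\td_\star R_A\calF_v=\calF_v$; and then \eqref{eq:Fpsi:3} together with injectivity of $\calF$ (the standing assumption $\kernel\calF=\{0\}$, equivalently the uniqueness of the input action noted after \eqref{eq:Fpsi:3}) yields $\psi_A(v)=v$, with the converse running the same chain backwards. In short: same skeleton as the paper, but your version makes explicit the injectivity step that the paper's ``immediately obtain'' quietly relies on.
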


As a consequence of Corollary~\ref{cor:inv_sys}, we obtain the following structure theorem for invariant kinematic systems.

\begin{theorem}\label{th:inv_sys_decomp}
Let $\calF \colon V \to \gothX(\grpG)$ be an invariant kinematic system on a Lie group $\grpG$ over a vector space $V$.
Then $V$ is finite dimensional and there is a direct sum decomposition of $V$ into
\[
V = V^0 \oplus V^I_\perp \oplus V^{II}_\perp,
\]
where $V^0$ is a Type 0, $V^I_\perp$ a Type I and $V^{II}_\perp$ a Type II invariant velocity subspace.
\end{theorem}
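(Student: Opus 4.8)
The plan is to transfer the problem from the input space $V$ to its image $\image\calF\subset\gothX(\grpG)$, make the invariant structure explicit through the lift, and then read the decomposition back through the injection $\calF$. First I would settle finite-dimensionality: by Definition~\ref{def:invariant_system} invariance means $\image\calF\subset\gothL(\grpG)+\gothR(\grpG)$, and since $\gothL(\grpG)\cong\gothg\cong\gothR(\grpG)$ we have $\dim(\gothL(\grpG)+\gothR(\grpG))\leq 2\dim\grpG$. As $\kernel\calF=\{0\}$, the map $\calF$ is injective, so $\dim V=\dim\image\calF\leq 2\dim\grpG<\infty$.

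Next I would expose the generators of an invariant velocity. For each $v$ there exist $\Lambda_1(v),\Lambda_2(v)\in\gothg$ with $\calF_v(X)=X\Lambda_1(v)+\Lambda_2(v)X$, equivalently $\Lambda(X,v)=\Lambda_1(v)+\Ad_{X^{-1}}\Lambda_2(v)$; this pair is unique modulo the substitution $(\Lambda_1,\Lambda_2)\mapsto(\Lambda_1+\zeta,\Lambda_2-\zeta)$ with $\zeta\in\gothz$, since $X\mapsto X\zeta$ and $X\mapsto\zeta X$ agree precisely on the center. Fixing a linear complement of $\gothz$ in $\gothg$ makes $\Lambda_1,\Lambda_2$ into linear maps $V\to\gothg$ and yields well-defined linear maps $\bar\Lambda_1,\bar\Lambda_2\colon V\to\gothg/\gothz$. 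By Lemma~\ref{lem:left_invariant} and Lemma~\ref{lem:right_invariant}, $\calF_v\in\gothL(\grpG)$ iff $\bar\Lambda_2(v)=0$ and $\calF_v\in\gothR(\grpG)$ iff $\bar\Lambda_1(v)=0$. I therefore set
\[
V^I:=\calF^{-1}(\gothL(\grpG))=\kernel\bar\Lambda_2,\quad V^{II}:=\calF^{-1}(\gothR(\grpG))=\kernel\bar\Lambda_1,\quad V^0:=\calF^{-1}(\gothB(\grpG)),
\]
which by Corollary~\ref{cor:inv_sys} are maximal Type I, Type II and Type 0 invariant velocity subspaces.

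The directness of the decomposition is then formal. Since $\gothB(\grpG)=\gothL(\grpG)\cap\gothR(\grpG)$, pulling back through $\calF$ gives $V^I\cap V^{II}=\calF^{-1}(\gothL(\grpG)\cap\gothR(\grpG))=V^0$. Choosing any linear complements $V^I_\perp$ of $V^0$ in $V^I$ and $V^{II}_\perp$ of $V^0$ in $V^{II}$ (for instance orthogonal complements with respect to a fixed inner product, whence the subscript) makes the sum $V^0\oplus V^I_\perp\oplus V^{II}_\perp=V^I+V^{II}$ direct, and Corollary~\ref{cor:inv_sys} guarantees that $V^0$ is Type 0, $V^I_\perp\subset V^I$ is Type I and $V^{II}_\perp\subset V^{II}$ is Type II.

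The one substantive step — and the one I expect to be the main obstacle — is proving that these subspaces exhaust $V$, i.e. that $V^I+V^{II}=V$, equivalently $\kernel\bar\Lambda_1+\kernel\bar\Lambda_2=V$. Geometrically this asserts that every invariant velocity splits into a Type I (left invariant) plus a Type II (right invariant) component, i.e. that $\image\calF=(\image\calF\cap\gothL(\grpG))+(\image\calF\cap\gothR(\grpG))$; this is a genuine compatibility condition between $\image\calF$ and the pair $(\gothL(\grpG),\gothR(\grpG))$ and must be extracted from the invariance structure rather than merely from $\image\calF\subset\gothL(\grpG)+\gothR(\grpG)$. To obtain it I would bring the action $\td_\star R$ of Lemma~\ref{lem:d_star_R} to bear: the right invariant fields $\gothR(\grpG)$ are fixed pointwise by $\td_\star R$, while $\gothL(\grpG)$ is preserved with $\td_\star R_A(\td L_\grpG\Lambda)=\td L_\grpG\Ad_{A^{-1}}\Lambda$, so that whenever $\image\calF$ is $\td_\star R$-invariant (Lemma~\ref{lem:equivariance}) differentiating the invariance forces the left generators $\ad_\xi\Lambda_1(v)$ back into $\image\calF\cap\gothL(\grpG)$; iterating this peels the purely left invariant part off each velocity and realises the splitting. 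Pinning down exactly which invariance hypothesis on $\calF$ makes this argument close is where the real content of the theorem lies.
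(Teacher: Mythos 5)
Your construction coincides step for step with the paper's own proof: the paper defines $V^I$ and $V^{II}$ as the preimages of $\gothL(\grpG)$ and $\gothR(\grpG)$ under $\calF$, sets $V^0:=V^I\cap V^{II}$, invokes Corollary~\ref{cor:inv_sys} for the Type properties and for finite dimensionality, and chooses complements $V^I=V^0\oplus V^I_\perp$ and $V^{II}=V^0\oplus V^{II}_\perp$ exactly as you do (your direct dimension bound $\dim V\le 2\dim\grpG$ via injectivity of $\calF$ is a harmless variant, and your generator maps $\bar\Lambda_1,\bar\Lambda_2$ are a more explicit packaging of the same preimages). The one step you leave open --- that $V=V^I+V^{II}$ --- is precisely the step the paper dispatches in a single clause: it asserts this identity holds ``by linearity of $\calF$'', and the rest of its proof is what you wrote.

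Your refusal to accept that clause is well founded: linearity alone does not give it, and in fact the step cannot be closed under the literal hypotheses. A subspace of $\gothL(\grpG)+\gothR(\grpG)$ need not equal the sum of its intersections with $\gothL(\grpG)$ and $\gothR(\grpG)$, and this failure is realised by systems that are invariant in the sense of Definition~\ref{def:invariant_system}: on $\grpG=\SO(3)$ (connected, $\gothz=\{0\}$) pick nonzero $\Lambda_1,\Lambda_2\in\gothg$ and set $V=\R$, $\calF_v(X):=v(X\Lambda_1+\Lambda_2X)$. This $\calF$ is linear and injective with $\image\calF\subset\gothL(\grpG)+\gothR(\grpG)$, so the system is invariant (it is even group affine); but $\Lambda(X,v)=v\Lambda_1+v\Ad_{X^{-1}}\Lambda_2$, so \eqref{eq:Type_I} forces $\Ad_{X^{-1}}(v\Lambda_2)=v\Lambda_2$ for all $X$, i.e. $v\Lambda_2\in\gothz=\{0\}$, and \eqref{eq:Type_II} similarly forces $v\Lambda_1=0$; hence every Type 0, Type I or Type II invariant velocity subspace is trivial and no decomposition of $V=\R$ exists. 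This also shows your proposed repair via $\td_\star R$ cannot succeed: it needs $\image\calF$ to be $\td_\star R$-invariant (equivariance), which is not a hypothesis of the theorem, and the example above is not equivariant --- consistent with the paper's own observation that invariant systems need not be. The theorem and the paper's proof are only coherent under a stronger reading of ``invariant'', namely $\image\calF=(\image\calF\cap\gothL(\grpG))+(\image\calF\cap\gothR(\grpG))$ (each measured velocity channel separately left or right invariant, as in the pitot-tube/GPS discussion following the theorem), under which the step you flagged is immediate. So, judged as a reconstruction, your proposal is the paper's proof minus its one unjustified sentence; the obstacle you isolated as ``the real content of the theorem'' is genuine, and it is a gap in the paper's argument rather than an oversight in yours.
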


\begin{proof}
Since the system is invariant, we have $\image\calF\subset\gothL(\grpG)+\gothR(\grpG)$.
Let $V^I\subset V$ (resp. $V^{II}\subset V$) be the preimage of $\gothL(\grpG)$ (resp. $\gothR(\grpG)$) under $\calF$ then $V=V^I+V^{II}$ by linearity of $\calF$. By Corollary~\ref{cor:inv_sys}, $V^I$ is a Type I invariant velocity subspace and $V^{II}$ is a Type II invariant velocity subspace. Define $V^0 := V^I \cap V^{II}$ then $V^0$ is a Type 0 invariant velocity subspace.

By Corollary~\ref{cor:inv_sys}, $V^I$ and $V^{II}$ (and hence $V$) are finite dimensional and one can find direct sum decompositions $V^I = V^I_\perp \oplus V^0$ of $V^I$ and $V^{II} = V^{II}_\perp \oplus V^0$ of $V^{II}$. The Type I resp. Type II property of $V^I_\perp$ resp. $V^{II}_\perp$ follows from Corollary~\ref{cor:inv_sys}. We now have $V = V^0 \oplus V^I_\perp \oplus V^{II}_\perp$.
\end{proof}

An interesting consequence of Theorem~\ref{th:inv_sys_decomp} is that if a kinematic system is invariant then $\dim V \leq 2 \dim \grpG$ dimensions.
Interestingly, the kinematics on $\grpG$ have at most $\dim \grpG$ instantaneously independent degrees of freedom, so it may be the case that the velocities $v \in V$ are instantaneously dependent.
For example, a measurement of airspeed derived from a pitot tube measurement device on an aerial robot is a body-fixed frame measurement of linear velocity and is a Type I invariant velocity in the above language [\cite{RM_2013_Mahony_nolcos}].
A measurement of GPS velocity is a reference-fixed measurement of linear velocity and is a Type II invariant velocity in the above language [\cite{RM_2013_Mahony_nolcos}].
Both measurements concern the same physical velocity and are instantaneously dependent, however, clearly the measurements themselves are not identical.
This example emphasises the importance of the structural analysis of the input space, providing a mechanism to understand and separate the velocities associated with left-invariant (or body-fixed frame) velocity measurements from right-invariant (or reference fixed frame) velocity measurements.
Importantly, the structure of $V$ reflects the nature of the measurement processes and is not just a reparameterization of the tangent space of $\grpG$.

\section{Group Affine Systems}\label{sec:GA_systems}

The class of group affine systems was introduced recently to provide a condition for suitability of a system for observer design using the Invariant Extended Kalman Filter (IEKF) framework [\cite{2016_Barrau_arxive,2017_Barrau_tac}].

\begin{definition}\label{def:GA_sys}\textbf{(Group Affine System)} [\cite{2017_Barrau_tac}]
A kinematic system $\calF\colon V \rightarrow \gothX(\grpG)$ is \emph{group affine} if it satisfies
\begin{align}
\calF_v(A B) = \calF_v(A) B + A \calF_v(B) - A \calF_v(\Id) B
\label{eq:GA_def}
\end{align}
for all $A,B\in\grpG$ and $v\in V$.
\hfill$\square$
\end{definition}

We start by showing that invariant kinematic systems are group affine.

\begin{lemma}\label{lem:inv_implies_GA}
Let $\calF \colon V \to \gothX(\grpG)$ be an invariant kinematic system on a Lie group $\grpG$ over a vector space $V$. Then the system is group affine.
\hfill$\square$
\end{lemma}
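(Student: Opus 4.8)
The plan is to exploit the fact that the group affine defect is \emph{linear} in the vector field, which reduces the claim to checking the group affine identity separately on left invariant and on right invariant vector fields, the two building blocks of an invariant system.

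First I would introduce, for a fixed vector field $F \in \gothX(\grpG)$, the group affine defect
\[
G(F)(A,B) := F(AB) - F(A)B - A F(B) + A F(\Id) B,
\]
and observe that $F \mapsto G(F)$ is linear in $F$, since each term evaluates $F$ at a single point and multiplies it on the left or right by fixed group elements. Hence the set $\calG := \{F \in \gothX(\grpG) : G(F) \equiv 0\}$ of group affine vector fields is a vector subspace of $\gothX(\grpG)$. Because $\calF$ is linear, it then suffices to show $\image\calF \subset \calG$; and since the system is invariant, $\image\calF \subset \gothL(\grpG) + \gothR(\grpG)$ by Definition~\ref{def:invariant_system}, so it is enough to prove the two inclusions $\gothL(\grpG) \subset \calG$ and $\gothR(\grpG) \subset \calG$.

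Next I would verify these inclusions by direct substitution. For a left invariant field $F(X) = X\Lambda$ with $\Lambda \in \gothg$ (the form given by Lemma~\ref{lem:left_invariant}), using $F(\Id) = \Lambda$ one gets $G(F)(A,B) = AB\Lambda - A\Lambda B - AB\Lambda + A\Lambda B = 0$. For a right invariant field $F(X) = \Lambda X$ (the form given by Lemma~\ref{lem:right_invariant}), one gets $G(F)(A,B) = \Lambda AB - \Lambda AB - A\Lambda B + A\Lambda B = 0$. Thus $\gothL(\grpG), \gothR(\grpG) \subset \calG$, whence $\gothL(\grpG) + \gothR(\grpG) \subset \calG$, and therefore every $\calF_v$ is group affine.

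I do not expect a genuine obstacle here: the content is entirely the two one-line verifications on the invariant building blocks. The only point that warrants care is confirming that the group affine identity is honestly linear in the vector field, so that $\calG$ is a well defined subspace and the reduction from an arbitrary invariant field to its left and right invariant summands is legitimate. As an alternative route one could bypass $\calG$ and instead invoke the decomposition $V = V^I + V^{II}$ obtained in the proof of Theorem~\ref{th:inv_sys_decomp}, check group affineness on $V^I$ and on $V^{II}$, and extend to all of $V$ by linearity of $\calF$; the underlying computations are identical.
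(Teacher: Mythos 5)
Your proof is correct and takes essentially the same approach as the paper's: verify the group affine identity on left invariant fields ($F(X)=X\Lambda$) and right invariant fields ($F(X)=\Lambda X$) by direct substitution, then extend to all of $\image\calF \subset \gothL(\grpG)+\gothR(\grpG)$ by linearity. If anything, your explicit defect-subspace argument (the space $\calG$) handles the linearity reduction more carefully than the paper, whose opening claim that ``all velocities are Type I or Type II invariant'' glosses over velocities whose vector fields are genuinely mixed sums of left and right invariant parts.
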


\begin{proof}
Since the system is invariant, all velocities are Type I or Type II invariant. Consider the lift $\Lambda$ (Def.~\ref{def:vel_lift}) and consider the case that $v\in V$ is Type I invariant.
Substituting $\calF_v(X) = X\Lambda(X,v)=X\Lambda(\Id,v)$, cf. \eqref{eq:Type_I}, into the right hand side of \eqref{eq:GA_def} yields
\begin{align*}
\calF_v(A) B +&  A \calF_v(B) - A \calF_v(\Id) B   \\
& = A\Lambda(\Id,v) B + AB\Lambda(\Id,v) - A \Lambda(\Id,v) B  \\
& = AB\Lambda(\Id,v) = F_v(AB)
\end{align*}
which proves \eqref{eq:GA_def} in this case.
For $v\in V$ Type II invariant then substituting $F_v(X) = X\Lambda(X,v)=\Lambda(\Id,v)X$, cf. \eqref{eq:Type_II}, into the right hand side of \eqref{eq:GA_def} yields
\begin{align*}
\calF_v(A) B +&  A \calF_v(B) - A \calF_v(\Id) B  \\
& = \Lambda(\Id,v) AB + A\Lambda(\Id,v)B - A \Lambda(\Id,v) B  \\
& =\Lambda(\Id,v) AB = F_v(AB)
\end{align*}
which proves \eqref{eq:GA_def} also in this case.
\end{proof}

It follows from Lemma~\ref{lem:inv_implies_GA} and the discussion in Section~\ref{sec:invariant_systems} that a group affine kinematic system is not necessarily equivariant.
We now prove that the equivariant input extension of a group affine kinematic system is again group affine and differs from the original system only by a (finite dimensional) Type I invariant velocity subspace.

\begin{theorem}\label{th:GA_ext}
Let $F \colon \vecV \to \gothX(\grpG)$ be a group affine kinematic system on a Lie group $\grpG$ over a vector space $\vecV$ and let $\calF \colon V \to \gothX(\grpG)$ be its equivariant input extension.
Then there exists a Type I invariant velocity subspace $V_\perp^{I}\subset V$ such that $V=\vecV \oplus V_\perp^{I}$ and $\calF \colon V \to \gothX(\grpG)$ is group affine.
\hfill$\square$\end{theorem}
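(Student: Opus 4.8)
The plan is to exploit the group affine structure of $F$ at the level of the lift in order to show that each generating vector field $\td_\star R_A F_v$ of $V$ differs from $F_v$ only by a left invariant vector field. Translating the defining identity \eqref{eq:GA_def} into the lift $\Lambda$ (Def.~\ref{def:vel_lift}) via $\calF_v(X)=X\Lambda(X,v)$ yields the equivalent relation
\[
\Lambda(AB,v)=\Ad_{B^{-1}}\Lambda(A,v)+\Lambda(B,v)-\Ad_{B^{-1}}\Lambda(\Id,v)
\]
for all $A,B\in\grpG$. First I would compute the lift of the generator $\td_\star R_A F_v$. Using \eqref{eq:overline_dphi} and $\td R_A\Lambda=\Lambda A$ one obtains $(\td_\star R_A F_v)(X)=F_v(XA^{-1})A$, so its lift is $\Ad_{A^{-1}}\Lambda(XA^{-1},v)$. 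Substituting the group affine lift relation above with the pair $(X,A^{-1})$ and simplifying gives
\[
\Ad_{A^{-1}}\Lambda(XA^{-1},v)=\Lambda(X,v)+\Ad_{A^{-1}}\Lambda(A^{-1},v)-\Lambda(\Id,v),
\]
where the last two terms form an element of $\gothg$ that is \emph{independent of $X$}. This cancellation of the $X$-dependence is the conceptual heart of the argument; everything else is bookkeeping.

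Multiplying on the left by $X$, this computation shows $\td_\star R_A F_v=F_v+\td L_\grpG\lambda_{A,v}$, where $\lambda_{A,v}:=\Ad_{A^{-1}}\Lambda(A^{-1},v)-\Lambda(\Id,v)\in\gothg$ and $\td L_\grpG\lambda_{A,v}\in\gothL(\grpG)$ is the associated left invariant vector field. Taking $A=\Id$ gives $\lambda_{\Id,v}=0$, so $F_v\in V$, and hence each left invariant field $\td L_\grpG\lambda_{A,v}=\td_\star R_A F_v-F_v$ also lies in $V$. Consequently $V=\vecV+W$, where $W:=\spn\{\td L_\grpG\lambda_{A,v}\mid A\in\grpG,\,v\in\vecV\}\subset\gothL(\grpG)$. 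Since $\gothL(\grpG)\cong\gothg$ is finite dimensional, $W$ is finite dimensional with $\dim W\le\dim\grpG$. To produce the direct sum I would choose a complement $V_\perp^I$ of $\vecV\cap W$ inside $W$, so that $W=(\vecV\cap W)\oplus V_\perp^I$; then $V=\vecV+W=\vecV+V_\perp^I$ and $\vecV\cap V_\perp^I\subseteq(\vecV\cap W)\cap V_\perp^I=\{0\}$, giving $V=\vecV\oplus V_\perp^I$. Because $V_\perp^I\subseteq\gothL(\grpG)$, the restriction of $\calF$ to $V_\perp^I$ is left invariant, so by Corollary~\ref{cor:inv_sys} it is a (finite dimensional) Type I invariant velocity subspace.

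It remains to show $\calF\colon V\to\gothX(\grpG)$ is group affine. The key observation is that for fixed $A,B$ the identity \eqref{eq:GA_def} is \emph{linear} in the vector field $\calF_v$, so the set of group affine vector fields is a linear subspace of $\gothX(\grpG)$; hence it suffices to verify the spanning fields $\td_\star R_A F_v$. Each such field equals $F_v$, which is group affine by hypothesis, plus the left invariant field $\td L_\grpG\lambda_{A,v}$, which is group affine by Lemma~\ref{lem:inv_implies_GA} since a left invariant vector field is in particular an invariant velocity. Being a sum of group affine fields, $\td_\star R_A F_v$ is group affine, and therefore so is every element of $V=\spn\{\td_\star R_A F_v\}$.

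The main obstacle is the first paragraph: one must recognise that the group affine identity is precisely what forces the $X$-dependence of the lift of $\td_\star R_A F_v$ to cancel against that of $F_v$, so that $\td_\star R_A F_v-F_v$ is left invariant. Once this is in hand, the decomposition is elementary linear algebra in the finite dimensional space $\gothL(\grpG)$, and the group affine closure of $\calF$ follows immediately from the linearity of \eqref{eq:GA_def} together with Lemma~\ref{lem:inv_implies_GA}.
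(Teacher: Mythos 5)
Your proposal is correct and takes essentially the same approach as the paper: under the identification $\psi_A(v)=\td_\star R_A F_v$, your left invariant fields $\td L_\grpG\lambda_{A,v}=\td_\star R_A F_v-F_v$ are precisely the paper's difference operators $D^\psi_{A^{-1}}(v)=\psi_{A^{-1}}(v)-v$, your $W$ is the paper's $V^I=\spn D^\psi_\grpG(\vecV)$, and the finite dimensionality, direct-sum, and linearity-plus-Lemma~\ref{lem:inv_implies_GA} arguments coincide. The only cosmetic difference is that you verify left invariance of the difference fields by a direct computation in $\gothX(\grpG)$, whereas the paper routes the same computation through the lift equivariance identity \eqref{eq:equi_lift} to obtain \eqref{eq:GA_lift}.
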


\begin{proof}
Consider the lift $\Lambda$ (Def.~\ref{def:vel_lift}) of the equivariant input extension
$\calF \colon V \to \gothX(\grpG)$ and recall that $\calF_v=F_v$ for $v\in\vecV\equiv\image F\subset V$. Hence the lift of $F$ is given by the restriction of $\Lambda$ to $\grpG\times\vecV$. Substituting into \eqref{eq:GA_def} one obtains
\begin{gather*}
AB \Lambda (AB, v) = A \Lambda(A,v) B + AB \Lambda(B,v) - A \Lambda(\Id,v) B
\end{gather*}
for all $A,B\in\grpG$ and $v\in\vecV$.
Pre-multiplying by $A^{-1}$ and post-multiplying by $B^{-1}$ yields
\begin{gather*}
B \Lambda (AB, v)B^{-1} -\Lambda(A,v) =  B \Lambda(B,v)B^{-1} - \Lambda(\Id,v),
\end{gather*}
using equivariance of the lift function \eqref{eq:equi_lift} yields
\begin{gather*}
\Lambda (A, \psi_{B^{-1}}( v)) -\Lambda(A,v) =   \Lambda(\Id,\psi_{B^{-1}}( v)) - \Lambda(\Id,v),
\end{gather*}
and finally exploiting linearity of the lift with respect to the velocity yields
\begin{gather}
\Lambda (A, \psi_{B^{-1}}(v) - v) =   \Lambda(\Id, \psi_{B^{-1}}( v)-v)
\label{eq:GA_lift}
\end{gather}
for all $A,B\in\grpG$ and $v\in\vecV$.

Define a family of linear operators
\[
D^\psi_\grpG = \{ D^\psi_B \colon V \to V,\  D^\psi_B(v) := \psi_{B^{-1}}(v) - v \;|\;  B \in \grpG \}
\]
where $D$ stands for the difference of the linear operator $\psi_{B^{-1}}$ to the identity operator $\Id(v) = v$ acting on $V$.
Consider the image of $\vecV\subset V$ under the family $D^\psi_\grpG$, that is
\[
D^\psi_\grpG(\vecV) = \{ w \in V \;|\; w = D^\psi_B(v), B \in \grpG, v \in \vecV \} .
\]
For any $w=D^\psi_B(v)\in D^\psi_\grpG(\vecV)$ (noting that $v \in \vecV$) then \eqref{eq:GA_lift} shows that
\begin{align}\label{eq:image_LG}
\Lambda (A, w ) &=   \Lambda(\Id, w), \text{ for all } A \in \grpG.
\end{align}
That is, by definition and using that \eqref{eq:image_LG} is linear in $w$,
\[
V^I := \text{span} D^\psi_\grpG(\vecV)\subset V
\]
is a Type I invariant velocity subspace and hence finite dimensional with $\dim V^{I}\leq\dim\grpG$ by Corollary~\ref{cor:inv_sys}.

To show that $V = \vecV + V^I$ we only need to show that $V \subset \vecV + V^I$. Let $w\in V$. By \eqref{eq:input_ext_psi}, there exist $B_i \in \grpG$ and $v_i \in \vecV$ such that $w = \sum \psi_{B_i}(v_i)$. Then
\begin{align*}
w - \sum v_i =  \sum \psi_{B_i}(v_i) - \sum v_i = \sum D^\psi_{B_i}(v_i),
\end{align*}
that is $w = \sum v_i + \sum D^\psi_{B_i}(v_i) \in \vecV + V^I$.
It follows that $V = \vecV + V^I$.

Since $V^{I}$ is finite dimensional one can find a direct sum decomposition $V^{I}=(\vecV \cap V^{I})\oplus V_\perp^{I}$. The Type I property of $V_\perp^{I}$ follows from Corollary~\ref{cor:inv_sys}. We now have $V=\vecV \oplus V_\perp^{I}$.

The extended system $\calF \colon V \to \gothX(\grpG)$ is group affine because for every $w\in V$ there exist $v\in\vecV$ and $v^{I}\in V_\perp^{I}$ such that $w=v+v^{I}$, and as \eqref{eq:GA_def} is linear in $v$ and holds for $v\in\vecV$, we only need to check that it also holds for $v^{I}\in V_\perp^{I}$. The result then follows from Corollary~\ref{cor:inv_sys} and Lemma~\ref{lem:inv_implies_GA}.
\end{proof}

The study of the fine structure of group affine systems (in the spirit of Theorem~\ref{th:inv_sys_decomp}) requires advanced Lie theoretic tools and is beyond the scope of this paper.

\section{Equivariant Filter (\texorpdfstring{E\MakeLowercase{q}F}{EqF})}
\label{sec:EqF}

In this section we present a new filter design that we term the \emph{Equivariant Filter} or EqF.
This filter can be applied to any equivariant kinematic system on a Lie group, and in general, to any system that can be embedded into such an equivariant system. By Theorem~\ref{th:equiv_ext}, this includes any kinematic system (equivariant or not) on a Lie group.

In order to introduce the filter design then it is necessary to define state measurements.
Assume that there is a measurement process $h\colon \grpG \to \R^m$ where for simplicity in the present paper we assume that the output is the real vector space of dimension $m$.
That is, we model a nonlinear measurement process
\[
y = h(X) \in \R^m.
\]

Define \emph{wedge} and \emph{vee} linear operators $\wedge\colon \R^n \to \gothg$ and $\vee\colon \gothg \to \R^n$ that map the Lie algebra $\gothg$ back and forth into a real vector space $\R^n$, where $n = \dim\gothg = \dim \grpG$.
Thus, for $\epsilon \in \gothg$, $\epsilon^\vee \in \R^n$ while $\epsilon = (\epsilon^\vee)^\wedge \in \gothg$.
For a linear operator $A_t\colon \gothg \to \gothg$ then $A_t^\vee \in \R^{n \times n}$ is the matrix operator on $\R^n$ such that
\[
A_t(\epsilon) = \left( A_t^\vee \epsilon^\vee \right)^\wedge
\]
for all $\epsilon\in\gothg$.
Similarly, for a linear operator $C_t\colon \gothg \to \R^m$,
\[
C_t (\epsilon) = C_t^\vee \epsilon^\vee
\]
for all $\epsilon\in\gothg$, where the fact that we map into a vector output space obviates the need for  wedge and vee operators on the output space.

The observer state $\hat{X} \in \grpG$ is taken as an element of the Lie group $\grpG$.
The canonical state error is
\[
E =  X \hat{X}^{-1}.
\]
We will consider an observer of the form
\begin{align}
\ddt{\hat{X}} = \calF_{v(t)} (\hat{X}) + \Delta_t \hat{X},
\label{eq:observer}
\end{align}
where $\Delta_t \in \gothg$ is a correction term that will be derived from the output measurements $y$ and the observer state $\hat{X}$ and will depend on time varying gains $\Sigma$ and the linearisation discussed below.
The error kinematics are
\begin{align*}
\dot{E} &= \dot{X}\hat{X}^{-1}-X\hat{X}^{-1}(\ddt{\hat{X}})\hat{X}^{-1} \\
&= X\Lambda(X,v)\hat{X}^{-1}-X\Lambda(\hat{X},v)\hat{X}^{-1}-X\hat{X}^{-1}\Delta_t \\
&= E \Ad_{\hat{X}} \left( \Lambda(X, v) - \Lambda(\hat{X}, v) \right)
- E \Delta_t,
\end{align*}
expressed in terms of the lift $\Lambda$ (Def.~\ref{def:vel_lift}).
Since we assume equivariance, then Lemma~\ref{lem:equiv_lift} allows the simplification
\begin{align}
\dot{E} = E  \left( \Lambda(E, \psi_{\hat{X}^{-1}}(v)) - \Lambda(\Id, \psi_{\hat{X}^{-1}}(v)) \right)
- E \Delta_t.
\label{eq:equi_error_dyn}
\end{align}
The key observation for these error kinematics is that apart from an observer state dependent transformation of the input signal, and indeed the time dependence of the input signal itself, the dynamics are autonomous in the error $E$.
That is, setting $w(t) = \psi_{\hat{X}^{-1}}(v(t)) \in V$ to be a known exogenous input one has
\begin{align}
\dot{E} = E  \left( \Lambda(E, w(t)) - \Lambda(\Id, w(t)) \right)
- E \Delta_t.\notag
\end{align}

\begin{remark}
If the system is equivariant and group affine then \eqref{eq:GA_lift} can be rewritten as
\[
\Lambda (A, \psi_{B^{-1}}(v)) - \Lambda(\Id, \psi_{B^{-1}}( v))
= \Lambda (A, v) - \Lambda(\Id,v)
\]
for all $A,B\in\grpG$ and $v\in V$.
From this it is clear that the error kinematics \eqref{eq:equi_error_dyn} can be written directly as
\[
\dot{E} = E  \left( \Lambda(E, v(t)) - \Lambda(\Id, v(t)) \right)
- E \Delta_t
\]
in the group affine case, and they are explicitly independent of the observer state.
This formula does not depend on the equivariant extension or the definition of $\psi$, meaning that it also holds for a group affine system that is not equivariant.
It is this structure of the error kinematics that is exploited by the Invariant Extended Kalman Filter, cf. \cite{2017_Barrau_tac}.
\hfill$\square$
\end{remark}

The proposed filter design is based on the linearisation of \eqref{eq:equi_error_dyn} at the identity $\Id\in\grpG$, taking the input $w(t) = \psi_{\hat{X}^{-1}}(v(t))$ as an exogenous input.
We denote the linearisation of the state error $E$ at $\Id \in \grpG$ by $\epsilon$.
That is, $E \approx \Id + \epsilon$ at least to first order.
It is straightforward to see that $\epsilon \in \gothg$ is an element of the identity tangent space of $\grpG$.
The linearisation of the state error kinematics is given by the differential of $\Lambda(E,w(t))$ with respect to $E$ at the origin
\[
A_t := \left(  \left. \tD_E \right|_{\Id} \Lambda(E,w(t)) \right)^\vee
\]
and is time varying depending on the signal $w(t)=\psi_{\hat{X}^{-1}(t)} (v(t))$.
Note that in general there is an algebraic formula available for differentiation of $\Lambda(E,v)$ but no direct algebraic form for $\Lambda(E, \psi_{\hat{X}^{-1}} (v))$, because the latter is constructed through an equivariant input extension.
However, exploiting equivariance and using Lemma~\ref{lem:equiv_lift} one has
\begin{align*}
\left. \tD_E \right|_{\Id} \Lambda(E,\psi_{\hat{X}^{-1}} (v))
& = \left. \tD_E \right|_{\Id} \Ad_{\hat{X}} \Lambda(R_{\hat{X}}E,v) \\
& = \Ad_{\hat{X}} \left. \tD_E \right|_{\hat{X}} \Lambda(E,v)
\left.  \tD \right|_{\Id} R_{\hat{X}},
\end{align*}
where we use linearity of $\Ad_{\hat{X}}$ and the chain rule to obtain an algebraic formula for the linearisation in an explicit form.

The linearisation of the output error equation is given by
\[
C_t := \left( \left. \tD_E \right|_{\hat{X}} h \circ \td R_{\hat{X}} \right)^\vee,
\]
and the linearised error dynamics are
\begin{align}
\dot{\epsilon} & = A_t \epsilon - \Delta_t,  \notag \\
y & = C_t \epsilon. \notag
\end{align}

Let $P \in \PD(n)$ and $Q \in \PD(m)$ be positive definite matrices then the correction term $\Delta_t$ is designed based on a standard Kalman-Bucy filter design for the linearisation with state noise process $N(0,P)$ and output noise process $N(0,Q)$.
Substituting the resulting correction term into the observer equation \eqref{eq:observer} yields
\begin{align}
\ddt \hat{X} = \calF_{v(t)} (\hat{X}) + \td R_{\hat{X}} \left(\Sigma C_t^\top Q^{-1} (y - h(\hat{X})) \right)^\wedge,
\label{eq:EqF}
\end{align}
where $\Sigma \in \R^{n \times n}$ is the time varying solution of the Riccati equation
\begin{align}
\ddt \Sigma = \Sigma A_t + A_t^\top \Sigma + P - \Sigma C_t^\top Q C_t \Sigma.
\label{eq:EqF_Sigma}
\end{align}

The EqF filter \eqref{eq:EqF} and \eqref{eq:EqF_Sigma} specialises directly to the highly successful IEKF in the case where the system is group affine or invariant.
The proposed formulation now brings this to a broader class of systems, including any kinematic system on a Lie group.

\section{Conclusion}\label{sec:conclusion}

We have shown that any kinematic system on a Lie group can be embedded into an equivariant kinematic system through an extension of the input space. Restricting the new input to the original input space yields the original system behaviour. This structured embedding allows the construction of a general Equivariant Filter (EqF) for kinematic systems on Lie groups that specialises to the highly successful Invariant Extended Kalman Filter (IEKF) in the case of group affine or invariant kinematic systems.

\section*{Acknowledgments}
This research was supported by the Australian Research Council
through the ``Australian Centre of Excellence for Robotic Vision'' CE140100016
and through the Discovery Project DP160100783 ``Sensing a complex world: Infinite dimensional observer theory for robots''.



\end{document}